\documentclass{article}

\usepackage[final]{neurips_2021}

\usepackage[utf8]{inputenc} 
\usepackage[T1]{fontenc}    
\usepackage{hyperref}       
\usepackage{url}            
\usepackage{booktabs}       
\usepackage{amsfonts}       
\usepackage{nicefrac}       
\usepackage{microtype}      
\usepackage{xcolor}         

\usepackage{amsthm}
\usepackage{amsfonts}
\usepackage{bbm}
\usepackage{graphics}
\usepackage{graphicx}
\usepackage{makecell}
\usepackage{color}
\usepackage{listings}

\newtheorem{thm}{Theorem}
\newtheorem{lem}[thm]{Lemma}

\usepackage{amsmath} 
\usepackage{amssymb}
\usepackage{bm}
\usepackage{ascmac}

\usepackage{algorithm,algpseudocode}

\usepackage{amsmath}

\newcommand{\Real}{\mathbb{R}}
\newcommand{\Natural}{\mathbb{N}}


\newcommand{\Ex}{\mathbb{E}}
\newcommand{\Prob}{\mathbb{P}}
\newcommand{\eps}{\epsilon}
\newcommand{\beps}{\bm{\epsilon}}

\newcommand{\Normal}{\mathcal{N}}
\newcommand{\Tdist}{\mathcal{T}}

\newcommand{\Ind}{\bm{1}}

\newcommand{\Cov}{\mathrm{Cov}}
\newcommand{\boot}{\mathrm{boot}}


\newcommand{\ED}{\mathcal{D}}

\newcommand{\EH}{\mathcal{H}}

\newcommand{\ES}{\mathcal{S}}
\newcommand{\ET}{\mathcal{T}}
\newcommand{\EU}{\mathcal{U}}

\newcommand{\EX}{\mathcal{X}}

\newcommand{\Hboth}{\mathcal{H}}
\newcommand{\Hzero}{\mathcal{H}_0}
\newcommand{\Hone}{\mathcal{H}_1}

\newcommand{\ba}{\bm{a}}
\newcommand{\balpha}{\bm{\alpha}}

\newcommand{\bp}{\bm{p}}
\newcommand{\bx}{\bm{x}}

\newcommand{\bv}{\bm{v}}
\newcommand{\bmu}{\bm{\mu}}
\newcommand{\bbeta}{\bm{\beta}}

\newcommand{\halpha}{\hat{\alpha}}
\newcommand{\hbalpha}{\hat{\bm{\alpha}}}

\newcommand{\bX}{\bm{X}}


\newcommand{\bu}{\bm{u}}

\newcommand{\br}{\bm{r}}


\newcommand{\FDR}{\mathrm{FDR}}
\newcommand{\hatFDR}{\hat{\mathrm{FDR}}}

\newcommand{\hatpi}{\hat{\pi}}

\newcommand{\mtwo}{m_{\mathrm{two}}}
\newcommand{\mone}{m_{\mathrm{one}}}

\newcommand{\Unif}{\mathrm{Unif}}


\allowdisplaybreaks

\title{Controlling False Discovery Rates under Cross-Sectional Correlations}



\author{%
  Junpei Komiyama \\
  Leonard N. Stern School of Business\\
  New York University\\
  New York, USA \\
  \texttt{junpei@komiyama.info} \\
   \And
   Masaya Abe \\
   Nomura Asset Management Co. \\
   Tokyo, Japan \\
   \texttt{masaya.abe.428@gmail.com} \\
   \And
   Kei Nakagawa \\
   Nomura Asset Management Co. \\
   Tokyo, Japan \\
   \texttt{kei.nak.0315@gmail.com} \\
   \And
   Kenichiro McAlinn \\
   Temple University \\
   Fox School of Business \\
   Philadelphia, USA \\
   \texttt{kenichiro.mcalinn@temple.edu} \\
}

\begin{document}

\maketitle

\begin{abstract}
We consider controlling the false discovery rate for testing many time series with an unknown cross-sectional correlation structure. 
Given a large number of hypotheses, false and missing discoveries can plague an analysis. 
While many procedures have been proposed to control false discovery, most of them either assume independent hypotheses or lack statistical power.
A problem of particular interest is in financial asset pricing, where the goal is to determine which ``factors" lead to excess returns out of a large number of potential factors.
Our contribution is two-fold. First, we show the consistency of Fama and French's prominent method under multiple testing. 
Second, we propose a novel method for false discovery control using double bootstrapping.
We achieve superior statistical power to existing methods and prove that the false discovery rate is controlled.
Simulations and a real data application illustrate the efficacy of our method over existing methods.
\end{abstract}

\section{Introduction}
The problem of multiple testing is prevalent in many domains; from statistics to machine learning, and genomics to finance.
One such problem is with false discoveries, where many non-significant hypotheses are reported as significant.
As the number of potential hypotheses can be enormous in many applications, a large number of false hypotheses are expected to be rejected, thus making controlling for the false discovery rate (FDR) necessary.
In the statistics literature, many methods to control the FDR have been proposed.
The most widely used is the Benjamini-Hochberg (BH) method \citep{benjamini1995controlling}, though the best performing procedure is the asymptotic method proposed by  \cite{storey2002} (Storey's method).
However, one key assumption Storey's method makes is independence amongst hypotheses; an assumption that is not realistic in many applications.
Although methods to control the FDR under correlated hypotheses have been proposed, such as \citep{benjamini2001,sarkar2008}, they lack statistical power and are inadequate in many circumstances. 

Our research is motivated by a longstanding problem in the financial asset pricing literature: finding ``factors'' that successfully predict the cross-sectional returns of stocks.
As finding effective factors is important for investment decisions, as well as understanding economic and social organization, much research has been devoted to finding these factors.
Over the decades, the search for new factors has produced hundreds of potential candidates \citep{harvey2016and,mclean2016does,hou2020replicating}, and is often described as a ``factor zoo.''
Although this is a typical multiple testing problem, \cite{harvey2016and} showed that almost all of the past research on factors fail to control for false discoveries.
As a result, there has been a recent interest in FDR control in finance, including \cite{harvey2016and,harvey2020}, which extend the resampling idea of \cite{famafrench2010}.
However, as financial data are highly correlated, its corresponding hypotheses are also correlated, connecting back to the limitation of existing methods.

To deal with problems with the existing methodology in FDR control, and to effectively detect signals in the factor zoo, we propose a novel methodology using null bootstrapping. The contributions of this paper are as follows:
{
\setlength{\leftmargini}{15pt} 
\begin{itemize}
\itemsep0em 
    \item We explain the resampling method of \cite{famafrench2010}, which is used to make valid inference on the percentiles of $p$-values, and derive its consistency (Section \ref{sec:ff}). 
    \item We formalize and generalize this method by proposing a framework for multiple testing with null bootstrapping (Section \ref{sec:setup}).
    \item Using the null bootstrap, we propose a testing method that controls the FDR (Section \ref{sec:method}). Unlike existing finance literature \cite{harvey2020}, we do not require the specification of the proportion of true signals and is fully algorithmic. Our method is valid even under correlated hypotheses, unlike Storey's method \cite{storey2002}, and has stronger power than existing methods (e.g., the YB method \cite{yekutieli1999} and the BY method \cite{benjamini2001}).
    \item We provide extensive simulations and a real-world application to financial factor selection to demonstrate the statistical power of our methodology (Section \ref{sec:sim}).
\end{itemize}
}

\subsection{Related work on multiple testing}

There are two widely studied objectives in multiple testing:
the family-wise error rate (FWER), which is the probability of including a false discovery amongst the found hypotheses, and the false discovery rate (FDR), which is the ratio of false discoveries amongst discoveries. On the one hand, controlling the FWER is crucial in some domains, e.g., when a false discovery can cause severe harm. On the other hand, the FDR provides a stronger statistical power than the FWER. This paper considers the FDR. Most of the existing methods for FDR control, such as Storey's method \citep{storey2002,storey2004} and $\alpha$-investing method \citep{foster2008}, do not work when the hypotheses are correlated\footnote{\cite{ramdas2019} identifies four categories of improvements in multiple testing: 1) Beliefs on the null distribution, 2) importance weighting, 3) grouping hypotheses, and 4) dependency amongst hypotheses. This paper considers and improves upon 1) and 4).}. 
\cite{yekutieli1999} proposed a resampling-based method to control the FDR. Their model (Section 3 therein) is similar to our null bootstrapping (Section \ref{sec:setup}), with regard to the assumption that samples from the null distribution are available. However, the power of their method (Section 4 therein) is limited because they cannot exploit knowledge on the ratio of null hypotheses. 
On the other hand, \cite{romano2008test} proposed a step-up testing method that utilizes bootstrap samples from the (estimated) true distribution and showed that it asymptotically controls the FDR, given good estimates of the order statistics of false discoveries.
In related work, \cite{blanchard2009} proposed a two-stage procedure that controls the FDR under dependence, given knowledge on the dependency (the shape function (3) therein). See also Section 1.2 in \cite{lei2018} and Section 1.2--1.4 in \cite{cai2021} regarding the literature on  testing methods that exploit contextual or structural information. 

\section{Cross-sectional resampling}\label{sec:ff}

Let there be $N$ portfolios (portfolios, here, are constructed using financial factors) and $T$ time steps. We use $i \in [N] = \{1,2,\dots,N\}$ to index a portfolio, and $t \in [T]$ to index a time step. 
We consider Fama and French's three-factor (FF3) model \citep{fama1993common}:
\begin{equation}\label{FF3}
    r_{i,t} = a_i + MKT_t\,b_i  + SMB_t\,s_i + HML_t\,h_i + \varepsilon_{i,t},
\end{equation}
where $r_{i,t}$ is the return of portfolio $i$; $MKT_t$, $SMB_t$, and $HML_t$ are market, size, value factor returns at time $t$; $a_i$ is the ``alpha'' of portfolio $i$, and $\varepsilon_{i,t}$ is the zero-mean noise. 
This model serves as a benchmark against each portfolio, with $a_i$ being the parameter of interest (what is tested). Here, $a_i > 0$ implies that the portfolio outperforms the baseline FF3.

For the ease of exposition, let $\bx_t = (1, MKT_t, SMB_t, HML_t)^\top$ and $\bbeta_i = (a_i, b_i, s_i, h_i)^\top$. Let $\br_i = (r_{i,1},r_{i,2},\dots,r_{i,T})^\top$ and $\beps_i = (\eps_{i,1},\eps_{i,2},\dots,\eps_{i,T})^\top$.
The matrix form of Eq.\,\eqref{FF3} is denoted as 
\begin{equation}\label{FF3_matrix}
\br_i = \bX \bbeta_i + \beps_i,
\end{equation}
where the $t$-th row of Eq.~\eqref{FF3_matrix} corresponds to Eq.~\eqref{FF3}.

The cross-sectional bootstrap proposed by \cite{famafrench2010} consists of the following steps. First, we obtain the ordinary least squares (OLS) estimate, $\hat{\bbeta}_i = (\bX^\top \bX)^{-1} \bX^\top \br_i \in \Real^{1+3}$, for each portfolio $i$. The parameter of the interest is the first component $\hat{a}_i = (\hat{\bbeta}_i)_1$ of $a_i$. 
The second step is to obtain the cross-sectional dependence of the portfolios by bootstrap simulations. 
Let the residual be $\hat{\eps}_{i,t} = (r_{i,t} - \bX_t^\top \hat{\bbeta}_i)$. 
For each simulation $b=1,2,\dots,B$, we sample (with replacement) $T$ time steps $\ET_b = (t_1^{(b)},t_2^{(b)},\dots,t_T^{(b)}) \in [T]\times[T]\times...\times[T]$ to obtain a variation of residual
\begin{equation}
\hat{e}_i^{(b)} = \frac{1}{T} \sum_{t \in \ET_b} \hat{\eps}_{i, t}.
\end{equation}
The correlation of $({\hat{e}}_i^{(b)})$, among portfolios $[N]$, measures the cross-sectional dependence of the portfolio returns; for two portfolios $i,j \in [N]$ that adopt a similar investment strategy, the variations ${\hat{e}}_i^{(b)}$ and ${\hat{e}}_j^{(b)}$ are strongly correlated.

Although the original motivation in \cite{famafrench2010} is mainly on percentile inference, we use this bootstrap method differently. We consider the multiple testing problem of finding significantly good (i.e., $a > 0$) portfolios.

\subsection{Consistency of \cite{famafrench2010}}\label{subsec_consistency}

Despite the popularity of this method, its theoretical property has not been fully addressed in the literature of finance. In this subsection, we characterize the property of this estimator. 

\begin{itemize}
\item \textbf{Assumption 2.1 (no perfect collinearity):} Matrix $\bX$ has full row rank $(= 4)$.
\item \textbf{Assumption 2.2 (no serial correlation):} Conditional on $\bX$, the error $(\eps_{i,t})_{t \in [T]}$ is independent and identically distributed as $\Normal(0, \sigma_i^2)$. 
\item \textbf{Assumption 2.3 (zero empirical bias):} $(1/T) \sum_t \bx_t = (1,0,0,0)$.
\end{itemize}
Assumption 2.1 and 2.2 are standard assumptions in square regression. While Assumption 2.2 states that the noise term is i.i.d., with respect to the temporal direction, we allow $\eps_{i,t}$ and $\eps_{j,t}$ to be correlated at each time step $t$.
Assumption 2.3 is easily satisfied by normalizing non-intercept features and is consistent with the arbitrage pricing theory (c.f. Eq.~(2) in \cite{ross1976arbitrage}), which is the theoretical background of the multi-factor model in finance.

Under the assumptions above, we derive the normality of the OLS statistics.
\begin{thm}{\rm (Marginal distribution)}\label{thm_marginaldist}
Let Assumptions 2.1--2.3 hold. Then, conditional on $\bX$, the OLS estimator $\hat{\bbeta}_i$ of $\bbeta_i$ is distributed as $\Normal(\bbeta_i, \sigma_i^2 (\bX^\top \bX)^{-1})$. Moreover, for $i,j \in [N]$, 
\begin{align}
\Ex[\hat{a}_i] &= a_i\\
T \Cov(\hat{a}_i, \hat{a}_j) &= \Cov(\eps_{i,1}, \eps_{j,1}).
\end{align}
\end{thm}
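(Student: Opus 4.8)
The plan is to reduce everything to the linear error representation of OLS and then exploit Gaussianity together with the two structural assumptions. Substituting the model \eqref{FF3_matrix} into the OLS formula gives the residual identity
$$\hat{\bbeta}_i - \bbeta_i = (\bX^\top\bX)^{-1}\bX^\top\beps_i,$$
which is well defined since $\bX^\top\bX$ is invertible by Assumption 2.1. Because $\beps_i \sim \Normal(\bzero, \sigma_i^2 \bI_T)$ conditional on $\bX$ (Assumption 2.2), and an affine image of a Gaussian vector is Gaussian, $\hat{\bbeta}_i$ is Gaussian with mean $\bbeta_i$. Its covariance follows from a sandwich computation that telescopes via $(\bX^\top\bX)^{-1}(\bX^\top\bX)(\bX^\top\bX)^{-1} = (\bX^\top\bX)^{-1}$, giving $\sigma_i^2(\bX^\top\bX)^{-1}$. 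This establishes the first claim, and reading off the first coordinate immediately yields $\Ex[\hat{a}_i] = a_i$, the second claim.

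For the cross-covariance I would express the intercept error as a single linear form. Writing $\be_1 = (1,0,0,0)^\top$ and $\bv := \bX(\bX^\top\bX)^{-1}\be_1$, the first coordinate of the residual identity reads $\hat{a}_i - a_i = \bv^\top\beps_i = \sum_t v_t\,\eps_{i,t}$. Hence
$$\Cov(\hat{a}_i, \hat{a}_j) = \sum_{t,s} v_t v_s\,\Cov(\eps_{i,t}, \eps_{j,s}).$$
Using that the noise is independent across time (Assumption 2.2), the off-diagonal terms $t \neq s$ vanish; using that it is identically distributed across time, $\Cov(\eps_{i,t}, \eps_{j,t}) = \Cov(\eps_{i,1}, \eps_{j,1})$ for every $t$. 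The sum therefore collapses to $\Cov(\hat{a}_i, \hat{a}_j) = \|\bv\|^2\,\Cov(\eps_{i,1}, \eps_{j,1})$, and since $\|\bv\|^2 = \be_1^\top(\bX^\top\bX)^{-1}\be_1 = [(\bX^\top\bX)^{-1}]_{11}$, the whole statement reduces to showing $[(\bX^\top\bX)^{-1}]_{11} = 1/T$.

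This last identity is where Assumption 2.3 is essential, and I expect it to be the crux of the argument. Writing $\bX^\top\bX = \sum_t \bx_t\bx_t^\top$, the $(1,1)$ entry equals $\sum_t 1 = T$, while the zero-empirical-bias condition $(1/T)\sum_t \bx_t = \be_1$ forces every off-diagonal entry in the first row and column to equal $\sum_t x_{t,k} = 0$ for $k \geq 2$. Thus $\bX^\top\bX$ is block diagonal, $\bX^\top\bX = \mathrm{diag}(T, \bA)$ for a $3\times 3$ matrix $\bA$, and inverting block by block gives $[(\bX^\top\bX)^{-1}]_{11} = 1/T$. Combining the pieces yields $T\Cov(\hat{a}_i, \hat{a}_j) = \Cov(\eps_{i,1}, \eps_{j,1})$. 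The only subtlety to watch is that the reduction of the double sum requires uncorrelatedness of $\eps_{i,t}$ and $\eps_{j,s}$ across distinct times $t \neq s$, i.e., that the cross-section at each time step is drawn independently over time; all remaining manipulations are routine linear algebra.
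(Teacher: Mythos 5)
Your proof is correct, and it reaches the same crux as the paper---that Assumption 2.3 orthogonalizes the intercept column against the three factor columns---but packages it differently. The paper invokes the partitioned (Frisch--Waugh--Lovell) regression formula to derive the explicit identity $\hat{a}_i = (1/T)\sum_{t} r_{i,t} = a_i + (1/T)\sum_t \eps_{i,t}$, from which the mean and covariance claims are read off directly. You instead keep the abstract linear form $\hat{a}_i - a_i = \bv^\top \beps_i$ with $\bv = \bX(\bX^\top\bX)^{-1}\be_1$, reduce the covariance to $\|\bv\|^2\,\Cov(\eps_{i,1},\eps_{j,1}) = [(\bX^\top\bX)^{-1}]_{11}\Cov(\eps_{i,1},\eps_{j,1})$, and then use the block-diagonality of $\bX^\top\bX$ forced by $(1/T)\sum_t \bx_t = \be_1$ to get $[(\bX^\top\bX)^{-1}]_{11} = 1/T$. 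The two are equivalent (your block inversion shows $\bv = (1/T)\bone$, which is exactly the paper's sample-mean representation), but your route avoids quoting the partitioned-regression lemma and makes the role of the $(1,1)$ entry of $(\bX^\top\bX)^{-1}$ explicit, while the paper's route yields the concrete formula for $\hat{a}_i$ that it reuses later (e.g., in discussing what breaks without Assumption 2.3). The subtlety you flag---that collapsing the double sum needs $\Cov(\eps_{i,t},\eps_{j,s})=0$ for $t\ne s$, i.e., the cross-sectional error vectors are independent across time---is real but is equally implicit in the paper's own derivation; Assumption 2.2 is stated marginally for each $i$, and both proofs tacitly read it as i.i.d.\ sampling of the joint vector $(\eps_{1,t},\dots,\eps_{N,t})$ over $t$.
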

Due to page limitations, all the proofs are in the supplementary material.

The following theorem states that the bootstrap samples captures the cross-sectional correlation of the estimator $(\hat{a}_i)_{i \in [N]}$.
\begin{thm}{\rm (Consistency of residual bootstrap)}\label{thm_resboot}
Let Assumptions 2.1--2.3 hold. Conditioned on $\EX$, we have
\begin{equation}\label{ineq_bootstrap_mean}
 \Ex_{\boot}[ \hat{e}_i^{(b)}] = 0,
\end{equation}
where $\Ex_{\boot}$ is the expectation over the randomness of bootstrap resampling\footnote{The formal definition of $\Ex_{\boot}$ is given in Eq.~\eqref{ineq_defboot} in the supplementary material.}.
Moreover, with probability at least $1/\delta$, we have
\begin{equation}\label{ineq_bootstrap_cov}
 T\, \Cov_{\boot}(\hat{e}_i^{(b)}, \hat{e}_j^{(b)}) = \Cov(\eps_{i,1}, \eps_{j,1}) + O\left(\frac{\log(1/\delta)}{T}\right)
\end{equation}
where 
$
\Cov_{\boot}(\hat{e}_i^{(b)}, \hat{e}_j^{(b)}) 
= \Ex_{\boot}\left[ \hat{e}_i^{(b)} \hat{e}_j^{(b)} \right]
$
is the covariance over the resampling.
\end{thm}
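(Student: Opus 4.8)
The plan is to reduce both bootstrap moments to deterministic functions of the OLS residuals and then control those residual moments through their exact algebraic relation to the true errors $\beps_i$, combined with Gaussian concentration in the temporal direction. Throughout I condition on $\bX$, so the only randomness in the first two steps is the resampling.

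First I would prove the mean identity \eqref{ineq_bootstrap_mean}. Writing $\ET_b=(t_1^{(b)},\dots,t_T^{(b)})$ as $T$ independent draws that are uniform on $[T]$, linearity gives $\Ex_{\boot}[\hat{e}_i^{(b)}]=\frac{1}{T}\sum_{t\in[T]}\hat\eps_{i,t}$. The normal equations $\bX^\top(\br_i-\bX\hat{\bbeta}_i)=\bzero$ state that the residual vector is orthogonal to every column of $\bX$; taking the intercept column, which equals $\bone$ by Assumption 2.3, yields $\sum_t\hat\eps_{i,t}=0$, hence \eqref{ineq_bootstrap_mean}. For the covariance, the same independence and centering make the off-diagonal ($k\neq l$) terms of $\Ex_{\boot}[\hat{e}_i^{(b)}\hat{e}_j^{(b)}]=T^{-2}\sum_{k,l}\Ex_{\boot}[\hat\eps_{i,t_k}\hat\eps_{j,t_l}]$ factor into products of zero means and vanish, leaving the $T$ diagonal terms. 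This gives the exact identity
\[
 T\,\Cov_{\boot}(\hat{e}_i^{(b)},\hat{e}_j^{(b)})=\frac{1}{T}\sum_{t\in[T]}\hat\eps_{i,t}\hat\eps_{j,t},
\]
so the bootstrap covariance is precisely the empirical covariance of the fitted residuals, and no further resampling randomness remains.

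The third step rewrites this residual moment using the true errors. With $\hat\eps_{i,t}=\eps_{i,t}-\bx_t^\top(\hat{\bbeta}_i-\bbeta_i)$ and $\hat{\bbeta}_i-\bbeta_i=(\bX^\top\bX)^{-1}\bX^\top\beps_i$, and letting $\bA=\bX(\bX^\top\bX)^{-1}\bX^\top$ be the hat (projection) matrix, all three correction terms collapse to the single bilinear form $\beps_i^\top\bA\beps_j$ (two with a minus sign, one with a plus), so that
\[
 \frac{1}{T}\sum_{t\in[T]}\hat\eps_{i,t}\hat\eps_{j,t}=\frac{1}{T}\sum_{t\in[T]}\eps_{i,t}\eps_{j,t}-\frac{1}{T}\,\beps_i^\top\bA\beps_j.
\]
I would treat the two pieces separately. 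The first is an average of the temporally i.i.d.\ products $\eps_{i,t}\eps_{j,t}$, each with mean $\Cov(\eps_{i,1},\eps_{j,1})$ by Assumption 2.2 (temporal independence, with same-time cross-sectional correlation allowed); a Bernstein-type bound for products of Gaussians shows it concentrates at $\Cov(\eps_{i,1},\eps_{j,1})$. For the correction, $\bA$ is a rank-$4$ projection, so $\Ex[\beps_i^\top\bA\beps_j]=\Cov(\eps_{i,1},\eps_{j,1})\tr(\bA)=4\,\Cov(\eps_{i,1},\eps_{j,1})$ and $T^{-1}\beps_i^\top\bA\beps_j=O(1/T)$ in expectation, with a Hanson--Wright inequality conditional on $\bX$ controlling its fluctuation. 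Combining the two bounds and absorbing lower-order terms yields \eqref{ineq_bootstrap_cov}.

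The main obstacle is the concentration analysis of the third step, and in particular pinning down the high-probability remainder. Because the errors are only cross-sectionally (not temporally) independent, I must apply scalar concentration along $t$ and separately bound the bilinear projection form $\beps_i^\top\bA\beps_j$ conditional on $\bX$, tracking how the sub-Gaussian norms of $\eps_{i,\cdot}$, $\eps_{j,\cdot}$ and the fixed rank of $\bA$ enter the $\delta$- and $T$-dependence claimed in \eqref{ineq_bootstrap_cov}. By contrast, the mean identity and the exact collapse of the bootstrap covariance are purely deterministic consequences of OLS orthogonality and the independence of the resampled indices, and require no probabilistic input.
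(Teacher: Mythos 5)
Your first two steps---deriving $\sum_{t}\hat\eps_{i,t}=0$ from the normal equations applied to the intercept column, and collapsing $T\,\Cov_{\boot}(\hat{e}_i^{(b)},\hat{e}_j^{(b)})$ to the empirical residual covariance $\frac{1}{T}\sum_{t}\hat\eps_{i,t}\hat\eps_{j,t}$ via the i.i.d.\ resampled indices and the zero mean---are exactly the paper's argument. Where you diverge is the final step. The paper bounds $|\hat\eps_{i,t}-\eps_{i,t}|\le|\bx_t|\,|\hat{\bbeta}_i-\bbeta_i|$ uniformly over $t$ using the normality of $\hat{\bbeta}_i-\bbeta_i$ from Theorem~\ref{thm_marginaldist}, and then directly concludes $\frac{1}{T}\sum_{t}\hat\eps_{i,t}\hat\eps_{j,t}=\Cov(\eps_{i,1},\eps_{j,1})+O(\log(1/\delta)/T)$. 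Your route through the projection identity $\hat{\beps}_i=(\bI-\bA)\beps_i$, which gives $\frac{1}{T}\hat{\beps}_i^\top\hat{\beps}_j=\frac{1}{T}\beps_i^\top\beps_j-\frac{1}{T}\beps_i^\top\bA\beps_j$ with $\tr(\bA)=4$, is cleaner and buys something real: it separates the two distinct error sources (sampling fluctuation of the true-error covariance, and the rank-$4$ fitting correction) that the paper's one-line conclusion conflates.

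One point needs attention, however. Your own Bernstein bound for $\frac{1}{T}\sum_{t}\eps_{i,t}\eps_{j,t}$ around $\Cov(\eps_{i,1},\eps_{j,1})$ yields a fluctuation of order $\sqrt{\log(1/\delta)/T}$, not $\log(1/\delta)/T$, so this term is \emph{not} lower order and cannot be ``absorbed'' into the remainder claimed in \eqref{ineq_bootstrap_cov}; only the projection correction $\frac{1}{T}\beps_i^\top\bA\beps_j$ (and the paper's cross terms $\frac{1}{T}\beps_i^\top\bX(\hat{\bbeta}_j-\bbeta_j)$) are genuinely $O(\log(1/\delta)/T)$. This is less a defect of your decomposition than of the stated rate: the paper's proof silently identifies the empirical covariance of the realized errors with the population covariance, which is precisely where the $1/\sqrt{T}$ term hides. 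If the remainder in \eqref{ineq_bootstrap_cov} is read as $O(\sqrt{\log(1/\delta)/T})$ (or if $\Cov(\eps_{i,1},\eps_{j,1})$ is read as the in-sample covariance of the realized errors), your argument goes through as written and is more rigorous than the original.
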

Theorem \ref{thm_resboot} states that bootstrap samples simulate the first and second moments, given sufficiently large $T$. 

\section{Multiple testing with cross-sectional bootstrap}\label{sec:setup}

This section introduces our setup for multiple testing over portfolio returns. Each of the portfolios, representing some financial factor, corresponds to a statistical hypothesis; on whether $a_i = 0$ (no significant alpha) or not. 

\subsection{Data generating process}

The estimated alpha $\hat{a}_i$ for each portfolio $i \in [N]$ is normally distributed, which yields the studentized test statistic $\halpha_i = \sqrt{T}\hat{a}_i/\hat{\sigma}_i$, where $\hat{\sigma}_i^2$ is the sample error variance. Let $\hbalpha = (\halpha_1,\halpha_2,\dots,\halpha_N)$.
Let $\alpha_i = \Ex[\sqrt{T}a_i/\sigma_i]$ be the rescaled true alpha of portfolio $i$ and its vector be $\balpha = (\alpha_1,\alpha_2,\dots,\alpha_N)$. 
Let the residual be $u_i = \alpha_i - \halpha_i$ and $\bu = (\alpha_1-\halpha_1, \alpha_2-\halpha_2,\dots,\alpha_N-\halpha_N)$. Theorem \ref{thm_marginaldist} implies that each $u_i$ follows a student $t$ distribution with $T-3-1$ degrees of freedom. 
Let $\EU$ be the ``correlated null'' distribution of $\bu$.
There is cross-sectional correlation on the test statistics (i.e., $u_i$ and $u_j$ for $i \ne j$ are correlated). 
Let $u^{(b)}_i = \sqrt{T}\hat{e}_i^{(b)}/\hat{\sigma}_i^{(b)}$ where $\hat{\sigma}_i^{(b)}$ is a bootstrap counterpart of $\hat{\sigma}_i$. Let $\bu^{(b)} = (u^{(b)}_1,u^{(b)}_2,\dots,u^{(b)}_N)$.
Theorems \ref{thm_marginaldist} and \ref{thm_resboot} imply that each bootstrapped sample $\bu^{(b)}$ simulates $\EU$, for sufficiently large $T$. For the ease of discussion, we assume the availability of an infinite number of samples from $\EU$. Namely, 
$
\bu^{(b)} \sim \EU
$
for each $b=1,2,3,\dots,B$.

We derive the asymptotic property of the cross-sectional bootstrap with the following assumptions
\begin{itemize}
\item \textbf{Assumption 3.1:} The observed statistic is $\hbalpha = \balpha + \bu \in \Real^N$, where $\bu \sim \EU$. The marginal distribution of $u_i$ follows a student-$t$ distribution with degrees of freedom $T-4$.
\item \textbf{Assumption 3.2:} Bootstrap samples $\bu^{(1)},\bu^{(2)},\bu^{(3)},\dots$ are  i.i.d. samples from $\EU$.
\end{itemize}
Theoretical results in the subsequent sections utilize these assumptions.

\subsection{Marginal distribution, $p$-values, null, and alternative hypotheses}

A $p$-value function $\bp = \bp(\hbalpha) = (p_1(\alpha_1),p_2(\alpha_2),\dots,p_N(\alpha_N))$ is the tail area of the $t$-distribution. In the case of one-sided testing, we have
\begin{equation}
p_i(u) = \int_{u}^\infty f(u') du',
\end{equation}
and in the case of two-sided testing, we have
\begin{equation}
p_i(u) = 2 \int_{|u|}^\infty f(u') du',
\end{equation}
where $f(u')$ is the probability density function (PDF) of the $t$-distribution. By definition, 
\begin{equation}
\forall p\in(0,1), \Prob[p_i(u_i) \le p ] = p
\end{equation}
for $u_i \sim \Tdist_{k}$. 
Let $\Hboth = [N]$ denote all the $N$ hypotheses.
The null hypotheses are
$\Hzero = \Hzero(\balpha) = \{i \in [N]: \alpha_i \le 0\}$
for the one-sided testing or
$\Hzero = \Hzero(\balpha) = \{i \in [N]: \alpha_i = 0\}$ for the two-sided testing, respectively. The alternative hypotheses are $\Hone(\balpha) = \Hboth \setminus \Hzero$. Let $N_0,N_1 = |\Hzero|,|\Hone|$, and $\pi_0, \pi_1 = N_0/N, N_1/N$. 

\subsection{Multiple testing}

A multiple testing method is a function:
$R(\hbalpha) : \Real^N \rightarrow 2^{[N]}.$
Namely, given the observation vector $\hbalpha$, it returns a subset of hypotheses that likely belong to $\Hone$. 
We say that the algorithm rejects hypothesis $i$ if $i \in R(\hbalpha)$. Let $\EH_{1|0}, \EH_{1|1}, \EH_{0|1}, \EH_{0|0}$ denote false positive, true positive, false negative, and true negative, respectively. Let $R = |\Hone| = N_{1|0} + N_{1|1}$  and let $N_{1|0}, N_{1|1}, N_{0|1}, N_{0|0}$ denote corresponding cardinalities. 
\if0
\begin{table}[h!]
\begin{center}
\vspace{-1em}
\caption{Multiple testing notation}
\label{tbl_hypotheses}
\vspace{0.5em}
\begin{tabular}{lccc} 
      & \multicolumn{1}{l}{\textbf{$\Hzero$ Not Rejected}} & \multicolumn{1}{l}{\textbf{$\Hzero$ Rejected}} & \multicolumn{1}{l}{\textbf{Total}} \\ \hline
$\Hzero$ & $N_{0|0}$                              & $N_{1|0}$                          & $N_0$                     \\
$\Hone$ & $N_{0|1}$                              & $N_{1|1}$                          & $N_1$                     \\
Total & $N-R$                                  & $R$                                & $N$ \\
\hline
\end{tabular}
\end{center}
\end{table}
\fi
The false discovery rate, $\FDR = \FDR(\ba)$, which is the ratio of false positives over all the positives:
\begin{equation}
\FDR(\balpha) = 
\Ex_{\bu \sim \EU}\left[
\frac{N_{1|0}}{\max(R,1)}
\right],
\end{equation}
where the expectation is taken with respect to the randomness of $\bu \sim \EU$, such that $\hbalpha = \balpha + \bu$. 
The goal here is to develop a testing method of maximum statistical power such that $\FDR \le q$ for a given $q \in (0,1)$ and for any $\balpha$. 

\begin{figure*}[tb]
\vspace{0.5em}
\centering
\includegraphics[bb = 0 0 407 81,scale=0.9]{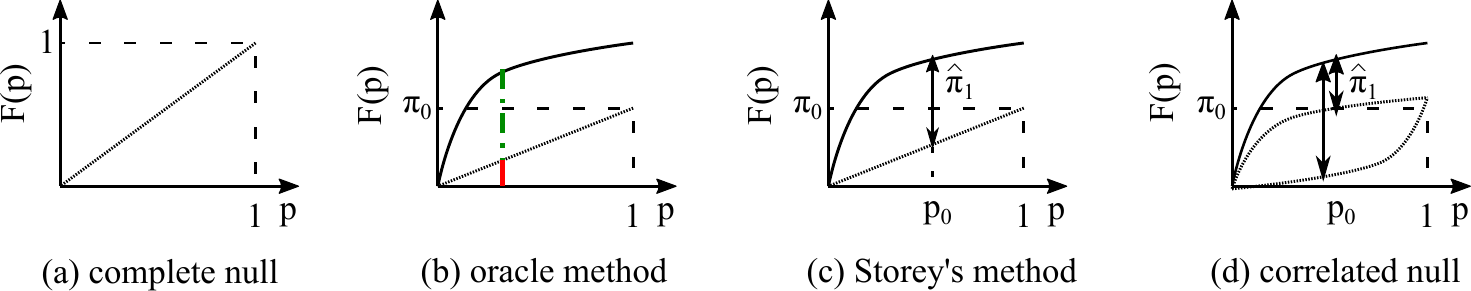}
\caption{Illustration of the oracle method \cite{genovese2004} and the plug-in estimator by Storey. (a) Independent null hypotheses are uniformly distributed, and thus its CDF is $F(p) = p$ (when $N \rightarrow \infty$). (b) Oracle method: given knowledge of $\pi_0 = N_0/N$, the red line corresponds to the density of nulls, whereas the green dash-dotted line corresponds to the density of the alternative hypotheses. Oracle threshold is $p_{\mathrm{oracle}}$, such that the ratio of red and green is $\delta$:$1-\delta$. (c) Storey's method adopts a plug-in estimator $\hatpi_1$ to the oracle formula. (d) When the hypotheses are correlated, the variance of $F(p)$ is unignorable, and Storey's method no longer estimates $\hatpi_1$ correctly.
}
\label{fig:methods}
\vspace{-1em}
\end{figure*}

\subsection{Step-up method and Storey's method}\label{subsec:bhstorey}

One of the most widely accepted method for controlling the FDR was proposed by Benjamini and Hochberg \citep[BH:][]{benjamini1995controlling}, which adopts a step-up procedure. The BH method sorts $p$-values, $p_{(1)} \le p_{(2)} \le p_{(3)} \le \dots \le p_{(N)}$, and rejects hypotheses in an increasing order up to 
\begin{equation}\label{ineq:bh}
p = \max_i \{ q i/N:  p_{(i)} \le q i / N\}.
\end{equation}
This method always controls the FDR at level $\pi_0 q \le q$ if the $p$-values are independent or positively correlated in specific ways\footnote{Correlation needs to satisfy the \textit{positive regression dependence on subset} (PRDS) condition \cite{benjamini2001}.}. When the dependence structure of $p$-values are unknown, a logarithmic correction is required \citep{bky2006}. Replacing $q$ with $q /(\log N + (1/2))$ in Eq.~\eqref{ineq:bh} guarantees the same FDR under arbitrary correlation\footnote{This logarithmic correction is tight: See Lecture 9 in \cite{candes2018tutorial} for construction of a worst-case example.}.

Although the BH method with the logarithmic correction strictly controls the FDR, for most cases, the threshold of Eq.~\eqref{ineq:bh} is excessively conservative, which results in weak statistical power compared to the ideal threshold. Under independent $p$-values, an optimal oracle method was proposed in \cite{storey2002,genovese2004}, which gives an accurate estimate of the FDR. Figure \ref{fig:methods} illustrates the oracle method. The oracle method requires the ratio of true hypotheses $\pi_1$, which is hard to obtain {\it a priori}. 
Storey's  method \citep{storey2002} adopts a plug-in estimator 
\begin{equation}
\hatpi_1 = \max\left( 
0, \frac{\hat{F}(p_0) - p_0}{1-p_0}
\right),
\end{equation}
for some reference point $p_0 \in (0,1)$. 
Given $\hatpi_1$, a threshold 
\begin{equation}\label{ineq:storeythreshold}
p = \sup\{p': \hat{F}_{\hbalpha}(p') \ge (1-\hatpi_1)p'/q\},
\end{equation}
is implemented, where $\hat{F}_{\hbalpha}(p) = (1/N) \sum_{i \le N} \Ind[p_i(\halpha_i) \le p]$ is the empirical CDF of $(p_i(\halpha_i))_{i\in[N]}$.
Storey's method controls the FDR under independent hypotheses \citep{storey2004,genovese2004}.
However, Storey's method no longer controls the FDR, both in theory and practice \citep{bky2006,romano2008test} under correlation.

\section{Our proposed method: DDBoot}\label{sec:method}

  \begin{algorithm}[!t]
    \caption{Dueling Double Bootstrap (DDBoot)}
    \label{alg_proposed}
    \begin{algorithmic}
        \Require $\hbalpha$, $q \in (0,1)$, $V$, $W \in \Natural^+$.
        \Ensure Threshold $p \in (0,1)$.
        \For{$v=1,\cdots,V$}
            \State Sample $\hbalpha^{(v)}$ by using Algorithm \ref{alg_proposed_sub_bmu}.
            \State Find $c_q^{(v)} \leftarrow \sup\{c_q\in(0,1): \hatFDR(\hbalpha^{(v)}, c_q) \le q/2\}$ by using Algorithm \ref{alg_proposed_sub} and binary search.
        \EndFor
        \State Determine final threshold $p$ by Eq.~\eqref{ineq:propsup} with $c_q \leftarrow \min_{v=1}^{V} c_q^{(v)}$.
        \end{algorithmic}
  \end{algorithm}
  \begin{algorithm}[!t]
    \caption{Subroutine for generating $\hbalpha^{(v)}$}
    \label{alg_proposed_sub_bmu}
    \begin{algorithmic}
        \Require $\hbalpha$.
        \Ensure $\bmu$.
        \State $\bu^{(v)} \sim \EU$.
        \State $\hbalpha^{(v)} \leftarrow \hbalpha - \bu^{(v)}$. 
        \If{Two-sided testing}
            \State $\halpha^{(v)}_i \leftarrow 0$ for all $\{i \in [N]: |\halpha_i| \le |u_i^{(v)}|\}$.
        \EndIf
        \end{algorithmic}
  \end{algorithm}

  \begin{algorithm}[!t]
    \caption{Subroutine for calculating $\hatFDR(\hbalpha^{(v)}, c_q)$}
    \label{alg_proposed_sub}
    \begin{algorithmic}
        \Require $c_q \in (0,1)$, $\bmu$, $W$.
        \For{$w=1,\cdots,W$}
            \State $\bu^{(w)} \sim \EU$.
            \State $\hbalpha^{(w)} = \hbalpha^{(v)} + \bu^{(w)}$.             \State Calculate threshold $p \leftarrow \sup\{p': \hat{F}_{\hbalpha^{(w)}}(p') \ge p'/c_q\}$. 
            \State Calculate number of discoveries: $R \leftarrow |\{i\in[N]: p_i(\halpha^{(w)}_i) \le p\}|$, $N_{1|0} \leftarrow |\{i\in[N]: p_i(\halpha^{(w)}_i) \le p, i \in \Hzero(\hbalpha^{(v)})\}|$ \Comment{FDR when $\Hzero = \Hzero(\hbalpha^{(v)})$ are nulls.}
            \State Compute $\hatFDR^{(w)} \leftarrow \frac{N_{1|0}}{\max(1,R)}$. 
        \EndFor
        \State $\hatFDR(\hbalpha^{(v)}, c_q) \leftarrow \frac{1}{W} \sum_{w=1}^{W} \hatFDR^{(w)}$.
    \end{algorithmic}
  \end{algorithm}

We consider a threshold that belongs to the same class as Storey's method. Namely, we reject all the hypotheses $p_i(\hbalpha) \le p$, such that 
\begin{equation}\label{ineq:propsup}
p = \sup\{p': \hat{F}_{\hbalpha}(p') \ge p'/c_q\},
\end{equation}
where the correction factor $c_q = c_q(q)$ is the factor that we aim to obtain. 
Let $\FDR(\balpha, c_q)$ denote the true FDR of the threshold of Eq.~\eqref{ineq:propsup}. 
Roughly speaking, when all the null hypotheses are independently distributed and under the complete null (i.e., $\hatpi_1 = 0$), we expect $c_q \approx q$ to control $\FDR(\balpha, c_q) \le q$. 
A larger value of $\hatpi_1$ allows a larger $c_q$. In an extreme case, if $\pi_1 = 1$ (i.e., complete alternative hypotheses), $c_q$ can be arbitrarily large because there is no false positive. 
While the true parameter $\balpha$ is not observable, we have access to $\hbalpha = \balpha + \bu$, where $\bu \sim \EU$, as well as bootstrap samples, $\bu^{(1)},\bu^{(2)},\bu^{(3)},...$, which are drawn from the same null distribution $\EU$ (Assumptions 3.1 and 3.2). 

The proposed Algorithm \ref{alg_proposed} involves two subroutines. Namely, it 1) samples $\hbalpha^{(v)}$ and 2) estimates $\FDR$ given $\hbalpha^{(v)}$. By using these subroutines, Algorithm \ref{alg_proposed} defines the threshold $c_q$ such that its estimated FDR is controlled for all sampled parameters $(\hbalpha^{(v)})_{v \in [V]}$.

{\bf 1) Sampling of $\hbalpha^{(v)}$ (Algorithm \ref{alg_proposed_sub_bmu}):}
By definition, $\balpha = \hbalpha - \bu$, and thus a reasonable estimate, $\hbalpha^{(v)}$ of $\balpha$, is $\hbalpha^{(v)} = \hbalpha - \bu^{(v)}$, with a bootstrap sample $\bu^{(v)} \sim \EU$. 

We design these samples $\hbalpha^{(v)}$ so that the number of null hypotheses, $|\Hzero(\hbalpha^{(v)})|$, is lower-bounded in terms of the number of true null hypotheses, $N_0 = |\Hzero(\balpha)|$, because 
the more null hypotheses there are, the larger the FDR is likely to be \footnote{\cite{sarkar2006} formalizes the conditions where the number of the null characterizes the hardness: Eq.~(3.2) therein states that the hardest case is the most marginal case where all the null and alternative hypotheses are close to the margin (i.e., $\hbalpha \approx 0$ in our notation), and on the margin, the hardness of the instance is solely defined by the number of null hypotheses. Although this argument in \cite{sarkar2006} essentially requires the independence of the hypotheses, the number of null hypotheses reasonably characterizes the hardness of the instance even under correlated hypotheses.}. 

In the case of one-sided testing, the naive value of $\hbalpha^{(v)}$ contains at least $N_0/2$ null hypotheses. 
In the case of two-sided testing, due to the continuity of the distribution, the points including the null hypothesis, $|\{\hbalpha^{(v)}: \halpha^{(v)}_i = 0\}| > 0$, has measure zero, which results in an underestimation of $N_0$. 
By setting $\halpha^{(v)}_i = 0$ for all $i$, such that $|\halpha_i| < |u^{(v)}_i|$, we guarantee that at least $\hbalpha^{(v)}$ contains at least $N_0/2$ null hypotheses.
The following theorem states that, with probability $1-O(1/V)$, we have at least one $\hbalpha^{(v)}$ that contains at least $N_0/2$ null hypotheses.
\begin{lem}{\rm (Number of null hypotheses)}\label{lem:numhypo}
With probability at least $1 - \frac{2}{V + 1}$, we have
\begin{equation}\label{ineq:numhypo}
\exists{v \in [V]}\ |\Hzero(\hbalpha^{(v)})| \ge N_0/2
\end{equation}
for both one-sided and two-sided testing.
\end{lem}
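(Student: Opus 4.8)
The plan is to reduce the claim to a symmetry (exchangeability) argument over the $V$ bootstrap draws together with the true residual. Write $\bu^{(0)} := \bu$, so that $\hbalpha = \balpha + \bu^{(0)}$, and by Assumptions~3.1--3.2 the vectors $\bu^{(0)},\bu^{(1)},\dots,\bu^{(V)}$ are i.i.d.\ draws from $\EU$; in particular the whole collection is exchangeable. Since I only need a \emph{lower} bound on $M_v := |\Hzero(\hbalpha^{(v)})|$, I restrict attention to the indices in the fixed (but unobservable) set $\Hzero(\balpha)$ and show that many of them land in $\Hzero(\hbalpha^{(v)})$.

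First I would carry out the per-coordinate reduction. In the one-sided case, for $i \in \Hzero(\balpha)$ we have $\alpha_i \le 0$ and $\halpha_i^{(v)} = \halpha_i - u_i^{(v)} = \alpha_i + u_i^{(0)} - u_i^{(v)}$, so $u_i^{(v)} \ge u_i^{(0)}$ already forces $\halpha_i^{(v)} \le \alpha_i \le 0$, i.e.\ $i \in \Hzero(\hbalpha^{(v)})$. In the two-sided case, $\alpha_i = 0$ gives $\halpha_i = u_i^{(0)}$, and the zeroing step of Algorithm~\ref{alg_proposed_sub_bmu} places $i$ in $\Hzero(\hbalpha^{(v)})$ exactly when $|u_i^{(0)}| \le |u_i^{(v)}|$. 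Thus, writing $\psi = \mathrm{id}$ (one-sided) or $\psi = |\cdot|$ (two-sided) and defining the score $S_v := |\{ i \in \Hzero(\balpha) : \psi(u_i^{(v)}) \ge \psi(u_i^{(0)}) \}|$, both cases satisfy $M_v \ge S_v$, so the failure event $\{\forall v \in [V]:\ M_v < N_0/2\}$ is contained in $\{\forall v \in [V]:\ S_v < N_0/2\}$.

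The core step is then a \textbf{tournament argument}. Extend the score to all ordered pairs by $S_{v,w} := |\{ i \in \Hzero(\balpha) : \psi(u_i^{(v)}) \ge \psi(u_i^{(w)}) \}|$, so $S_v = S_{v,0}$. Because the marginal of $\EU$ is continuous there are no ties almost surely, hence $S_{v,w} + S_{w,v} = N_0$ and $S_{v,0} < N_0/2 \iff S_{0,v} > N_0/2$. Therefore the failure event implies that draw $0$ strictly beats every other draw, $\forall v \in [V]:\ S_{0,v} > N_0/2$. At most one of the $V+1$ draws can strictly beat all the others (two such would contradict $S_{v,w}+S_{w,v}=N_0$), so the events $D_v := \{\text{draw } v \text{ strictly beats all others on } \Hzero(\balpha)\}$ are mutually exclusive; by exchangeability of $\bu^{(0)},\dots,\bu^{(V)}$ they are equiprobable, whence $(V+1)\,\Prob[D_0] \le 1$ and $\Prob[\text{failure}] \le \Prob[D_0] \le \tfrac{1}{V+1} \le \tfrac{2}{V+1}$.

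The main obstacle I anticipate is the bookkeeping at the boundary $S = N_0/2$ (relevant when $N_0$ is even) and the almost-sure no-ties claim, which must be justified from the continuity of the marginal $t$-distribution of $\EU$ (and of $|u|$ in the two-sided case); these are precisely the places where having a factor of two of slack is convenient, and they account for the gap between the $1/(V+1)$ the argument actually delivers and the stated $2/(V+1)$. The only other points needing care are that the reduction $M_v \ge S_v$ uses the fixed set $\Hzero(\balpha)$, which is unobservable but deterministic, so it is legitimate inside a probability statement, and that exchangeability requires $\bu$ to be independent of the bootstrap draws, which is implicit in Assumptions~3.1--3.2.
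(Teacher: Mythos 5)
Your proof is correct, and its front half --- reducing $|\Hzero(\hbalpha^{(v)})|$ to the pairwise score $S_v$ computed on the fixed, deterministic set $\Hzero(\balpha)$, handling the one-sided case and the two-sided zeroing step separately --- is essentially the paper's own reduction via its ``dueling estimator'' $m(\bu^{(v)},\hbalpha)\ge m(\bu^{(v)},\bu)$. Where you genuinely diverge is the probabilistic core. The paper conditions on $\bu$, defines the ``skill'' $s(\bu')=\Prob_{\bu''\sim\EU}[m(\bu'',\bu')\ge N_0/2]$, proves by a double-counting contradiction that $\Prob_{\bu'\sim\EU}[s(\bu')\le q]\le 2q$, and then integrates $(1-q)^V$ against this quantile bound, obtaining $\int_0^{1/2}2(1-q)^V\,dq\le 2/(V+1)$. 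You instead treat $\bu^{(0)}=\bu,\bu^{(1)},\dots,\bu^{(V)}$ as an exchangeable $(V+1)$-tuple, note that the failure event forces draw $0$ to strictly beat every other draw, and that at most one of the $V+1$ draws can do so; mutual exclusivity plus equiprobability gives $1/(V+1)$ directly. Your route is shorter and saves a factor of two over the stated bound; the paper's route never needs the almost-sure no-ties claim, since its key inequality $m(\bu',\bu'')+m(\bu'',\bu')\ge N_0$ tolerates ties. Your tie concern is, however, easily dispatched by the continuity of the marginal $t$-distributions and the independence of distinct draws, and can even be eliminated entirely by taking the exclusive events to be $\tilde D_v=\{\forall w\ne v:\ S_{w,v}<N_0/2\}$, whose pairwise incompatibility follows from $S_{v,w}+S_{w,v}\ge N_0$ alone. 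Both arguments rest on the same hypotheses (Assumptions 3.1--3.2, with $\bu$ independent of the bootstrap draws and all drawn from $\EU$), so nothing is lost in generality by your approach.
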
%
The proof, which is in the supplementary material, introduces the ``dueling estimator'' $m(\bu^{(v)}, \hbalpha)$, via which we lower-bound the number of null hypotheses.  

{\bf 2) Estimating $\hatFDR(\hbalpha^{(v)}, c_q)$ (Algorithm \ref{alg_proposed_sub}):}
Given $\hbalpha^{(v)}$ that has at least $N_0/2$ null hypotheses, we optimize $c_q$, such that $\FDR(\hbalpha^{(v)}, c_q) \approx q/2$.
The estimated FDR is obtained by replacing the expectation by its empirical counterpart with $W$ bootstrapped samples. The following theorem provides the error of the empirical FDR.
\begin{lem}{\rm (Accuracy of the empirical FDR)}\label{lem_empfdr}
For each $\hbalpha^{(v)}$ and $\delta \in (0,1)$, with probability at least $1-2\delta$, we have
\begin{equation}\label{ineq:empfdr}
|\FDR(\hbalpha^{(v)}, c_q) - \hatFDR(\hbalpha^{(v)}, c_q)| \le \sqrt{\frac{\log(1/\delta)}{2 W}}.
\end{equation}
\end{lem}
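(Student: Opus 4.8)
The plan is to recognize $\hatFDR(\hbalpha^{(v)}, c_q)$ as an empirical mean of $W$ i.i.d.\ bounded random variables and conclude by Hoeffding's inequality. First I would fix $\hbalpha^{(v)}$ (and the fixed value $c_q$), so that the only remaining randomness in Algorithm~\ref{alg_proposed_sub} comes from the inner bootstrap draws $\bu^{(1)},\dots,\bu^{(W)}$, which are i.i.d.\ from $\EU$ by Assumption 3.2. For each $w$, the term $\hatFDR^{(w)} = N_{1|0}/\max(1,R)$ is a deterministic function of $\bu^{(w)}$ alone: it depends on $\bu^{(w)}$ only through $\hbalpha^{(w)} = \hbalpha^{(v)} + \bu^{(w)}$, the induced threshold $p$, and the counts $R$ and $N_{1|0}$. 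Hence, conditioned on $\hbalpha^{(v)}$, the variables $\hatFDR^{(1)},\dots,\hatFDR^{(W)}$ are i.i.d., and the probability statement in the lemma is understood as being over these inner draws.

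Next I would check the two ingredients that Hoeffding requires. For boundedness, note that $N_{1|0}$ counts a subset of the $R$ discoveries, so $0 \le N_{1|0} \le R \le \max(1,R)$ and therefore $\hatFDR^{(w)} \in [0,1]$ (when $R=0$ the ratio is $0/1 = 0$, consistent with this range). For the centering, comparing the construction in Algorithm~\ref{alg_proposed_sub} with the definition of the FDR—treating $\Hzero = \Hzero(\hbalpha^{(v)})$ as the null set, exactly as the algorithm's comment indicates—shows that each term is unbiased, $\Ex_{\bu^{(w)} \sim \EU}[\hatFDR^{(w)}] = \FDR(\hbalpha^{(v)}, c_q)$. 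Consequently $\hatFDR(\hbalpha^{(v)}, c_q) = \tfrac{1}{W}\sum_{w=1}^{W}\hatFDR^{(w)}$ is an unbiased estimator of $\FDR(\hbalpha^{(v)}, c_q)$.

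Finally I would invoke Hoeffding's inequality for an average of i.i.d.\ variables with range $[0,1]$: for every $t>0$,
\[
\Prob\!\left[\,\bigl|\hatFDR(\hbalpha^{(v)}, c_q) - \FDR(\hbalpha^{(v)}, c_q)\bigr| \ge t\,\right] \le 2\exp\!\left(-2Wt^2\right).
\]
Setting the right-hand side equal to $2\delta$ and solving gives $t = \sqrt{\log(1/\delta)/(2W)}$, which is precisely the bound in Eq.~\eqref{ineq:empfdr} holding with probability at least $1-2\delta$. There is no genuine obstacle in this argument; the only points needing care are confirming that each $\hatFDR^{(w)}$ lies in $[0,1]$—so that the Hoeffding range equals exactly $1$ and reproduces the constant in Eq.~\eqref{ineq:empfdr} rather than a larger one—and that conditioning on $\hbalpha^{(v)}$ leaves the $W$ inner draws genuinely i.i.d., both of which follow directly from the algorithm's construction and Assumption 3.2.
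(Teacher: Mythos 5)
Your proposal is correct and follows exactly the route the paper takes: the paper's entire proof is the one-sentence observation that the $\hatFDR^{(w)}$ are independent, $[0,1]$-valued, unbiased estimators of $\FDR(\hbalpha^{(v)}, c_q)$, so Hoeffding's inequality applies. You have merely filled in the details (conditioning on $\hbalpha^{(v)}$, boundedness, and the $2\delta$ calibration of the two-sided tail bound), all of which are accurate.
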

The proof of Lemma \ref{lem_empfdr} directly follows from the Hoeffding inequality, since $(\hatFDR^{(w)})_{w=1,2,\dots,W}$ are unbiased estimators of $\FDR$ bounded in $[0,1]$ and independent to each other. 
Finally, one can use a binary search of the optimal threshold of $c_q$, such that
$ 
\inf_{c_q\in(0,1)} \hatFDR(\hbalpha^{(v)}, c_q) \le q/2,
$ 
assuming the monotonicity of the FDR.

The following theorem summarizes the accuracy of the proposed algorithm.
\begin{thm}{\rm (Main theorem)}\label{thm:main_prac}
For any $\balpha'$, assume that $\FDR(\balpha', c)$ is non-decreasing in $c$.
Let $c_q^* = \sup_{c_q} \{c_q: \FDR(\balpha, c_q) \le q\}$ be the desired threshold for the true FDR of level $q$.
If there exists $\hbalpha^{(v)}: v \in [V]$, such that 
\begin{equation}\label{ineq:assm_existhard}
\FDR(\hbalpha^{(v)}, c_{q'}^*) \ge q'/2
\end{equation}
for all $q' \in (0,1)$,
then, with probability at least $1-2\delta$,
the proposed method controls $\FDR$ at most
\begin{equation}
q + 2 \sqrt{\frac{\log((V S)/\delta)}{2 W}},
\end{equation}
where $V$ is the number of samples of $\hbalpha^{(v)}$, $S$ is the number of repetitions in the binary search, and $W$ is the number of bootstrap samples to estimate $\hatFDR$, respectively.
\end{thm}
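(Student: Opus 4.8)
The plan is to show that, on a high-probability event, the correction factor $c_q$ returned by Algorithm~\ref{alg_proposed} never exceeds $c^*_{q''}$ with $q'' := q + 2\eps$ and $\eps := \sqrt{\log((VS)/\delta)/(2W)}$; the assumed monotonicity of the true FDR then immediately yields $\FDR(\balpha, c_q) \le \FDR(\balpha, c^*_{q''}) \le q''$, which is the claimed bound. To construct the event, I first observe that across the $V$ outer draws and the $S$ binary-search steps, the subroutine of Algorithm~\ref{alg_proposed_sub} evaluates $\hatFDR$ at no more than $VS$ pairs $(\hbalpha^{(v)}, c)$. Applying Lemma~\ref{lem_empfdr} with confidence parameter $\delta/(VS)$ to each such pair and taking a union bound, I obtain that with probability at least $1 - 2\delta$ the estimate $|\FDR(\hbalpha^{(v)}, c) - \hatFDR(\hbalpha^{(v)}, c)| \le \eps$ holds simultaneously for every pair visited by the algorithm. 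I denote this event $\EE$ and carry out the remaining deterministic arguments on it.

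Next I would use the hard instance supplied by \eqref{ineq:assm_existhard}. Fix an index $v^* \in [V]$ for which that hypothesis holds at every level, and recall that the binary search returns $c_q^{(v^*)} = \sup\{c : \hatFDR(\hbalpha^{(v^*)}, c) \le q/2\}$, so that $\hatFDR(\hbalpha^{(v^*)}, c_q^{(v^*)}) \le q/2$. On $\EE$ this upgrades to $\FDR(\hbalpha^{(v^*)}, c_q^{(v^*)}) \le q/2 + \eps = q''/2$. If $q'' \ge 1$ the claim is vacuous since $\FDR \le 1$; otherwise, instantiating \eqref{ineq:assm_existhard} at $q' = q''$ gives $\FDR(\hbalpha^{(v^*)}, c^*_{q''}) \ge q''/2$. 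Feeding these two facts into the monotonicity of $\FDR(\hbalpha^{(v^*)}, \cdot)$ forces $c_q^{(v^*)} \le c^*_{q''}$, and since the threshold actually output is $c_q = \min_{v \in [V]} c_q^{(v)} \le c_q^{(v^*)}$, I conclude $c_q \le c^*_{q''}$.

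Finally I would transfer the ordering of correction factors back to the true parameter. From the definition $c^*_{q''} = \sup\{c : \FDR(\balpha, c) \le q''\}$ together with monotonicity and right-continuity of $\FDR(\balpha, \cdot)$, we have $\FDR(\balpha, c^*_{q''}) \le q''$; combining this with $c_q \le c^*_{q''}$ and monotonicity once more gives $\FDR(\balpha, c_q) \le \FDR(\balpha, c^*_{q''}) \le q'' = q + 2\eps$. Since all of this holds on $\EE$, the conclusion holds with probability at least $1 - 2\delta$, matching the statement.

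The delicate step is the passage $c_q^{(v^*)} \le c^*_{q''}$ in the second paragraph: one must convert empirical control of $\hatFDR$ at the \emph{sampled} instance $\hbalpha^{(v^*)}$ into a comparison of correction factors against the \emph{true} instance, and this is precisely where assumption \eqref{ineq:assm_existhard} --- that at every level some drawn $\hbalpha^{(v)}$ is at least half as hard as the truth --- is essential. The only subtlety is the boundary case in which $\FDR(\hbalpha^{(v^*)}, \cdot)$ is flat at the value $q''/2$: strict non-decrease gives $c_q^{(v^*)} \le c^*_{q''}$ outright, and in general the comparison is recovered from continuity of the FDR in $c$ (guaranteed by the continuity of the $t$-distribution). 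The concentration and union-bound steps are otherwise routine given Lemma~\ref{lem_empfdr}, and the role of the factor $S$ is simply to certify every threshold inspected during the binary search.
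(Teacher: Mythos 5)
Your proposal is correct and follows essentially the same route as the paper's proof: a union bound of Lemma~\ref{lem_empfdr} over the $VS$ evaluated pairs, then using the hard instance from Eq.~\eqref{ineq:assm_existhard} at level $q' = q + 2\sqrt{\log((VS)/\delta)/(2W)}$ together with monotonicity to show the returned $\min_v c_q^{(v)}$ is at most $c^*_{q'}$, and transferring back to $\FDR(\balpha,\cdot)$. Your explicit handling of the flat/boundary case is a point the paper glosses over, but otherwise the arguments coincide.
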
%
Theorem \ref{thm:main_prac} states that, if there is at least one instance that is harder than the true instance (Eq.~\eqref{ineq:assm_existhard}), and $W$ is sufficiently large, the proposed method controls the FDR. 

\begin{table*}[]
\vspace{-0.5em}
\caption{Results of synthetic simulations. $\pm$ indicates two-sigma confidence intervals of FDRs. Due to space limitation. we omit the results of BY, which is outperformed by the other algorithms.
\textcolor{red}{Red} characters are used for FDRs larger than $q = 0.05$, which indicates failure of controlling the FDR.
}
\label{tbl_results_synth}
\begin{footnotesize}
\begin{center}
\begin{tabular}{llllllllllll}
                & \multicolumn{2}{l}{LSU}                                                 &  & \multicolumn{2}{l}{Storey} &  & \multicolumn{3}{l}{Bootstrapping} &  & Fixed  \\ \cline{2-3} \cline{5-6} \cline{8-10} \cline{12-12} 
                & BH                   & BKY &  & Storey         & Storey-A         &  & YB           & \textbf{DDB} & \textbf{DDBA}          &  & Single \\ \hline
\\
\multicolumn{5}{l}{Scenario 1: $\sigma_{i,j}=0.0, \pi_0 = 0.5$} & & & & & & & \\ \hline
Thr-p&0.0005&0.0005&&0.0008&0.1093&&0.0001&0.0004&0.0018&&0.0385\\
\# of Rej&0.79&0.78&&1.01&7.04&&0.43&0.71&1.45&&6.29\\
FDR&0.0260&0.0256&&0.0344&\textcolor{red}{0.1006}&&0.0164&0.0232&0.0499&&\textcolor{red}{0.1922}\\
&$\pm$0.0060&$\pm$0.0060&&$\pm$0.0065&$\pm$0.0094&&$\pm$0.0052&$\pm$0.0057&$\pm$0.0072&&$\pm$0.0073\\ \hline

\\
\multicolumn{5}{l}{Scenario 2: $\sigma_{i,j}=0.5, \pi_0 = 0.5$} & & & & & & & \\ \hline
Thr-p&0.0012&0.0021&&0.0190&0.1591&&0.0004&0.0022&0.0069&&0.0363\\
\# of Rej&1.48&1.67&&3.51&10.41&&0.92&1.64&2.85&&6.40\\
FDR&0.0243&0.0270&&0.0475&\textcolor{red}{0.1150}&&0.0183&0.0246&0.0440&&\textcolor{red}{0.2040}\\
&$\pm$0.0054&$\pm$0.0056&&$\pm$0.0069&$\pm$0.0097&&$\pm$0.0051&$\pm$0.0053&$\pm$0.0068&&$\pm$0.0137\\ \hline

\\

\multicolumn{5}{l}{Scenario 3: $\sigma_{i,j}=0.9, \pi_0 = 0.5$} & & & & & & & \\ \hline

Thr-p&0.0018&0.0041&&0.0645&0.1924&&0.0019&0.0052&0.0123&&0.0273\\
\# of Rej&2.05&2.51&&8.62&15.30&&2.12&2.78&4.41&&6.15\\
FDR&0.0138&0.0179&&\textcolor{red}{0.0866}&\textcolor{red}{0.1544}&&0.0185&0.0220&0.0368&&\textcolor{red}{0.0971}\\
&$\pm$0.0039&$\pm$0.0042&&$\pm$0.0086&$\pm$0.0105&&$\pm$0.0052&$\pm$0.0049&$\pm$0.0061&&$\pm$0.0119\\ \hline
\end{tabular}
\end{center}
\end{footnotesize}
\end{table*}%

\section{Simulation and application}\label{sec:sim}

We conduct a comprehensive set of simulations to evaluate the empirical performance of the proposed method: DDBoot (Algorithm \ref{alg_proposed}).
Section \ref{subsec:simsynth} describes the results with a synthetic data generating process.
Section \ref{subsec:simreal} describes results from the performance of factors in the U.S. equity market. All the tests are two-sided. All the FDR targets are set at $q = 0.05$.
The source code of the implemented methods is publicly available\footnote{\url{https://www.dropbox.com/s/k7rnymz78tnvlh5/arxiv_jun2021_fdr_materials.zip?dl=0}}.

\noindent\textbf{Methods:} We compare and test the following methods:
``Single,'' adopts a fixed threshold $p=0.05$, ignoring multiplicity; linear step-up (LSU) methods, including ``BH'' \citep{benjamini1995controlling}, ``BY'' \citep{benjamini2001}, and ``BKY'' \citep{bky2006}; Storey methods, including ``Storey,'' which adopts a fixed reference point $p_0 = 0.5$, and ``Storey-A,'' which adaptively determines the reference point $p_0$ by bootstrapping $p$-values \citep{storey2004}; bootstrapping methods, including ``YB'' \citep{yekutieli1999}, ``DDB'' (DDBoot: Algorithm \ref{alg_proposed}), and ``DDBA,'' which is an aggressive version of DDBoot, where the threshold is controlled at $q$, instead of $q/2$.
We adopt pointwise $1-q/2$ upper percentile for $r_\beta^*(p)$ in the YB method and set its confidence level $q/2$ so that its FDR is less than $q/2 + q/2 = q$. The number of bootstrapped samples in YB and Storey-A are set to $500$. We set $V, W = 20, 500$ for DDB so that $V = 1/q, W\ge 1/q^2$.

\subsection{Synthetic data}\label{subsec:simsynth}

We generate multivariate $t$-statistics following the standard procedure (see section 7.1 in \cite{romano2008test}). We set the degrees of freedom as $100$ and the number of hypotheses as $N = 50$. 
The null distribution $\EU$ is a multivariate $t$-distribution with its covariance $\Sigma = (\sigma_{i,j}) \in \Real^{N \times N}$. The diagonals are  $\sigma_{i,i} = 1$, for all $i$. We consider several types of correlation for the non-diagonal $\sigma_{i,j}$. The true $\mu_i$ for each alternative hypothesis $i \in \Hone$ is drawn independently from $2 \times \Unif(0,1)$. 

\noindent\textbf{Scenarios:}
We consider two variables: 
The null ratio $\pi_0 = 0.25, 0.5, 1.0$ and the correlation $\sigma_{i,j} = 0.0, 0.5, 0.9$.
This yields $3 \times 3 = 9$ scenarios, which we only report the case for $\pi_0 = 0.5$ (see appendix for the full table). 

\noindent\textbf{Results:}
Table \ref{tbl_results_synth} shows the results of the simulations. 
All results are averaged over 2,000 runs. 
``Thr-p'' (threshold $p$) is the largest rejected $p$-value, ``\# of Rej'' is the number of rejected hypotheses $|R|$ that indicates the statistical power. ``FDR'' is the false discovery rate of each method.
Storey, Storey-A, and Single have the largest statistical power, though they fail at controlling the FDR under correlated hypotheses. 
The other six methods successfully control the FDR. Among them, DDBA has the strongest statistical power by a large margin, followed by DDB, BKY, and BY. 
Note that, when $\pi_0 = 1$ (complete null), there are some cases where DDBA fails to control the FDR (see Table 4 in the appendix), though it is still robust compared to Storey's method.

\subsection{Real data}\label{subsec:simreal}

\begin{figure}
\vspace{-1em}
\centering
\includegraphics[scale=0.6]{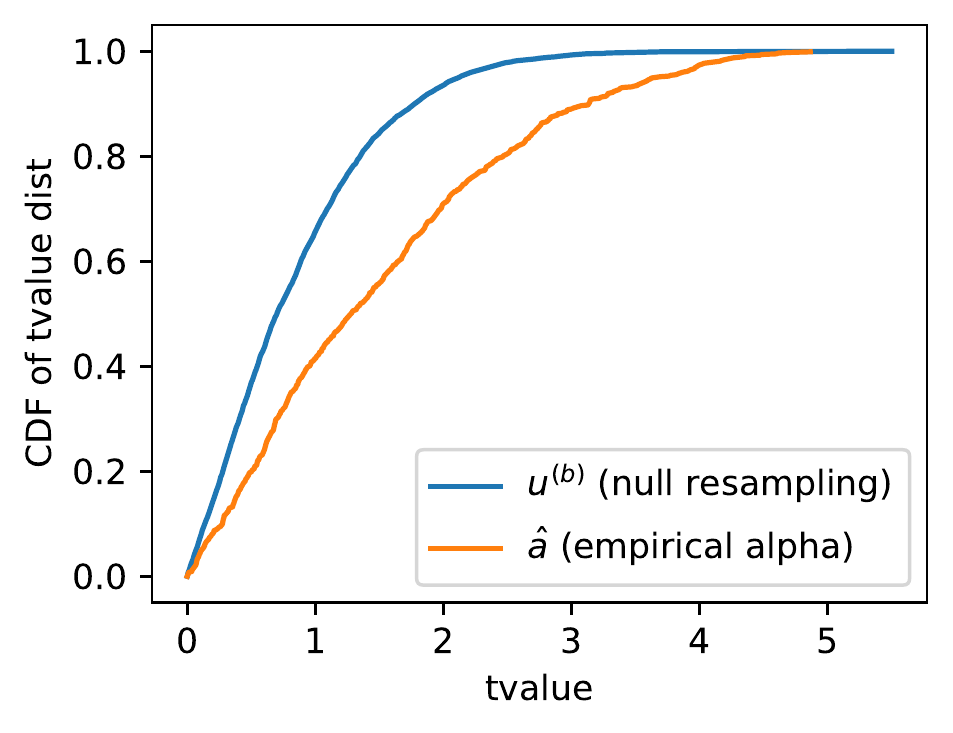}
\vspace{-1em}
\caption{Cumulative distribution of $(\halpha_i)_{i\in[N]}$ (alphas) and $(u_i^{(b)})_{i\in[N],b\in[B]}$ (resampled residuals). Significant portion of alpha surpasses the bootstrapped null samples, and thus the estimator of $\pi_1$ is expected to boost the statistical power. }
\label{fig:tvalue_cdf}
\vspace{-1em}
\end{figure}%
\begin{figure*}[t!]
\centering
\includegraphics[scale=0.6]{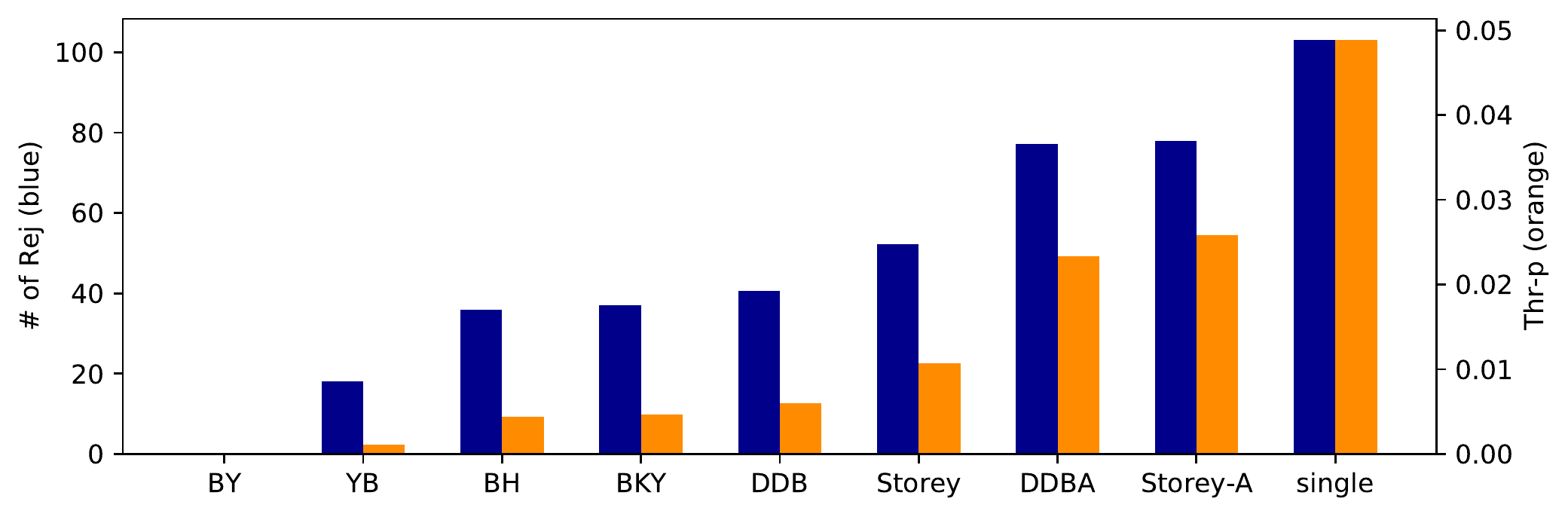}
\vspace{-0.5em}
\caption{Results of simulations on U.S. equity market data. Bars are \# of rejected hypotheses (left, blue) and threshold $p$-value (right, orange). The threshold $p$-value is defined as the largest $p$-value amongst rejected hypotheses. Methods are sorted by their statistical power.}
\label{fig_results_real}
\vspace{-0.5em}
\end{figure*}%

We compare the performance of each method with real-world asset pricing data. 

\textbf{Constructing alphas:} We prepare twenty factors listed in Table 3 in the appendix. Following \cite{yan2017fundamental}, we create a composite signal of these factors by using standard arithmetic operations (e.g., factor $X+Y$ made of factors $X,Y$) to yield $760$ composite factors.  
For each of these $20+760$ factors, we create a quintile long-short portfolio on the stocks listed on the MSCI USA Index from January 2000 to June 2020. Given each portfolio, we estimate the alpha of Eq.~\eqref{FF3} by using the FF3 factors. 
As a result, we obtain empirical alphas ($t$-values), $(\hbalpha_i)_{i\in [N]}$, with $N=780$.
Standard Newey-West corrections are applied to these $t$-values. 
We also obtain $B = 10,000$ samples from the null distribution, $\bu^{(1)},\bu^{(2)},\dots,\bu^{(B)}$, by applying OLS for the zero-alpha return, $r_{i,t}-\hat{a}_i$. 
Figure \ref{fig:tvalue_cdf} shows the distribution of alphas compared to bootstrapped null samples, with a significant portion improving over the null.

\noindent\textbf{Results:}
To speed up computation, we randomly pick half of the 780 factors and conduct multiple testing for the $t$-values. 
Figure \ref{fig_results_real} shows the results. All the results are averaged over 100 runs. 
DDBA performs close to Storey-A. Among the methods that always control the FDR (Tables \ref{tbl_results_synth} and 4), DDB performed the best.

\subsection{Summary of simulation and application}

1) DDB exactly controls at level $q$ and provides stronger statistical power compared to existing methods, such as BY and YB. It also outperforms BH and BKY, \if0\footnote{The improvement of the statistical power by DDB over BH and BKY is a significant improvement given BH and BKY is ``quite hard to break in practice'' (37p in \cite{genovese2004tutorial}).}\fi which do not control the FDR in some corner cases \citep{candes2018tutorial}. 
2) DDBA controls the FDR unless $\pi_0$ is close to $1$. Its performance on real data is comparable to Storey and Storey-A. In contrast, DDBA is robust to correlation. 

In summary, DDBA is the best solution to control for the FDR, unless $\pi_0 \approx 1$. When $\pi_0 \approx 1$ and the hypotheses may be correlated, DDB is a robust solution. When there is no correlation, the well-known Storey's method is the recommended solution.

\section{Conclusion}

Inspired by the residual bootstrapping of \cite{famafrench2010}, we formalize the problem of controlling the FDR among many portfolios. 
We prove the consistency of the original method and propose a novel method for multiple testing, DDBoot, that utilizes samples from the null distribution. 
As the proposed framework exploits the cross-sectional dependence, it can be applied to many other problems where the hypotheses are correlated. Addressing missing data broadens its applications.

\textbf{Limitations:} We have assumed no serial correlations (Assumption 2.2) and zero-mean factors (Assumption 2.3). Moreover, we have used asymptotic property of the test statistics (Assumption 3.1 and 3.2). We remark on these in Section \ref{sec_liminations} in the appendix.

\clearpage
\bibliography{sample}
\bibliographystyle{apalike}

\clearpage
\appendix

\begin{table*}
\centering
\caption{List of the factors. ``Category'' column indicates the corresponding category to Deutsche Bank Quantitative Strategy \citep{db2016} (Figure 30 therein). $\halpha_i$ corresponds to (the $t$-values of) the alpha of each factor.}
\label{tbl_factors}
\footnotesize
\begin{tabular}{|c|l|l|l|l|}
\hline
No & \multicolumn{1}{c|}{Factor} & \multicolumn{1}{c|}{Description}  & \multicolumn{1}{c|}{Category} & \multicolumn{1}{c|}{$\halpha_i$}           \\ \hline
1  & Book-value to Price Ratio                           & Net Asset/Market Value & Value & 2.838                          \\ \hline
2  & Earnings to Price Ratio                           & Net Profit/Market Value & Value & 2.150                       \\ \hline
3  & Dividend Yield                           & Dividend/Market Value & Value  & 0.771                          \\ \hline
4  & Sales to Price Ratio                           & Sales/Market Value & Value & 0.981                 \\ \hline
5  & Cash Flow to Price Ratio                          & 
Operating cash flow/Market Value    & Value & 1.802              \\ \hline
6  & Return on Equity                           & Net Profit/Net Asset & Quality & 3.518 \\ \hline
7  & Return on Asset                           &  Net Operating Profit/Total Asset  & Quality & 3.843\\ \hline
8  & Return on Invested Capital                          & Net Operating Profit After Taxes/(Liabilities with interest + Net Asset) & Quality & 3.973 \\ \hline
9  & Accruals                       & -(Changes in Current Assets and Liability-Depreciation)/Total Asset & Quality & 0.913\\ \hline
10 & Total Asset Growth Rate                        & Change Rate of Total Assets from the previous period  & Growth & 3.995 \\ \hline
11 & Current Ratio                          & Current Asset/Current Liability  & Quality &  1.151 \\ \hline
12 & Equity Ratio                        & Net Asset/Total Asset & Quality & 0.094 \\ \hline
13 & Total Asset Turnover Rate                        & Sales/Total Asset  & Quality & 0.005 \\ \hline
14 & CAPEX Growth Rate                      & Change Rate of Capital Expenditure from the previous period & Growth & 2.417 \\ \hline
15 & EPS Revision (1 month)                & 1 month Earnings Per Share (EPS) Revision & Sentiment & 2.648  \\ \hline
16 & EPS Revision (3 month)                & 3 month Earnings Per Share (EPS) Revision & Sentiment & 2.031\\ \hline
17 & Momentum (1 month)                      & Stock Returns in the last month & Risk/Reversal  & 1.199\\ \hline
18 & Momentum (12-1 month)            & Stock Returns in the past 12 months except for last month & Momentum & 1.709 \\ \hline
19 & Volatility                       & Standard Deviation of Stock Returns in the past 60 months & Risk/Reversal & 2.263 \\ \hline
20 & Skewness                            & Skewness of Stock Returns in the past 60 months & Risk/Reversal  & 0.720 \\ \hline
\end{tabular}
\end{table*}%

\section{Limitations}
\label{sec_liminations}

\begin{itemize}
\item \textbf{Serial correlation:} We have assumed independence of  samples across time (Assumption 2.2). As discussed in \cite{Fama1965,famafrench2010,cont2001empirical}, the literature on stock returns often ignores this auto-correlation\footnote{\cite{cont2001empirical} states the absence of auto-correlation as one of the stylized statistical properties of asset returns.}. Moreover, the auto-correlation of the real data is very weak, shown in Appendix \ref{sec_serialcorrelation}, which justifies our i.i.d. assumption.
\item \textbf{Zero-mean factors:} We assume each feature is mean zero, which is consistent with the arbitrage pricing theory (c.f. Eq.~(2) in \cite{ross1976arbitrage}). Although we consider the method to be useful even in the case where this assumption does not hold, Assumption 2.3 is crucial in deriving Theorems \ref{thm_marginaldist} and \ref{thm_resboot}. 
Without Assumption 2.3, the over/under-fitting of the three-factor dimensions affects the estimation of $\hat{a}_i$.
In the proof of Theorem \ref{thm_marginaldist} (see notation therein), we derive  
\begin{align}
\hat{a}_i 
&= (\bX_1^\top \bX_1)^{-1} \bX_1^{\top} (\br_i - \bX_{2:}^\top (\hat{\bbeta}_i)_{2:}),
\end{align}
and the effect of the three-factor regression term $\bX_{2:}^\top (\hat{\bbeta}_i)_{2:}$ does not cancel out without Assumption 2.3.
\item \textbf{Asymptotics:} Availability of the null distribution (Assumption 3.2) is based on the asymptotic property of bootstrap samples as $T \rightarrow \infty$ (Theorem \ref{thm_resboot}). We consider this reasonable because $T$ is moderately large ($> 200$), and a large portion of bootstrap literature provides asymptotic theory. Moreover, many variants of asymptotic statistical testings are widely used (e.g., the Chi-squared test for contingency tables is frequently used instead of Fisher's exact test).
\end{itemize}

\section{Computational complexity of our proposed method}

Algorithm \ref{alg_proposed} conducts doubly nested bootstrapping. It samples  $\hbalpha^{(v)}$ for $V$ times, and for each $\hbalpha^{(v)}$ it estimates the FDR with $W$ bootstrap samples. Therefore, its computational complexity is $O(VW)$ to $V,W$. Lemma \ref{lem:numhypo} implies $V = O(1/q)$ samples are sufficient to avoid the underestimation of $N_0$ with probability at least $1 - q$. Theorem \ref{thm:main_prac} implies $W = \tilde{O}(1/q^2)$ samples\footnote{$\tilde{O}$ ignores a polylogarithmic factor.} suffice to bound the estimation error of the FDR up to $o(q)$. Therefore, $O(VW) = \tilde{O}(1/q^3)$ calculation of the FDR is required. Note that each calculation of the FDR is $\tilde{O}(N)$ to the number of hypotheses $N$ since it only requires a step-up procedure for the sorted $p$-values. 
Parallelizing Algorithm \ref{alg_proposed} with respect to $V$ or $W$ is easy.

\section{List of factors}

Table \ref{tbl_factors} lists the twenty factors used in the simulation.

\section{Full simulation results}

Table \ref{tbl_results_synth_all} shows the results of the rest of the synthetic simulations.
\begin{itemize}
    \item BH, BKY, BY, YB, and DDB always control FDR. Among them, DDB performs best.
    \item Storey, Storey-A, and DDBA sometimes has larger FDR than $q = 0.05$. Compared with Storey and Storey-A, DDBA is more robust to correlation. In the worst instance of complete null, Storey's methods has FDR of $> 0.15 = 3q$, whereas DDBA has FDR at most $0.08 < 0.10 = 2 q$.
\end{itemize}

\clearpage

\begin{table*}[]
\begin{footnotesize}
\begin{center}
\caption{Results of supplemental synthetic simulations. $\pm$ in FDR indicates the standard two-sigma confidence interval. Red characters indicate FDRs larger than $0.05$, which implies the failure in controlling FDR.
}
\label{tbl_results_synth_all}
\vspace{1em}
\begin{tabular}{llllllllllll}
                & \multicolumn{2}{l}{LSU}                                                 &  & \multicolumn{2}{l}{Storey} &  & \multicolumn{3}{l}{Boostrapping} &  & Fixed  \\ \cline{2-3} \cline{5-6} \cline{8-10} \cline{12-12} 
                & BH                   & BKY &   & Storey         & Storey-A         &  & YB           & \textbf{DDB} & \textbf{DDBA}          &  & Single \\ \hline

\\
\multicolumn{5}{l}{Scenario 4: $\sigma_{i,j}=0.0, \pi_0 = 1.0$} & & & & & & & \\ \hline
Thr-p&0.0000&0.0000&&0.0000&0.0723&&0.0000&0.0000&0.0001&&0.0311\\
\# of Rej&0.06&0.06&&0.06&3.96&&0.04&0.04&0.09&&2.49\\
FDR&0.0505&0.0500&&0.0555&\textcolor{red}{0.1465}&&0.0345&0.0390&\textcolor{red}{0.0800}&&\textcolor{red}{0.9095}\\ 
&$\pm$0.0098&$\pm$0.0097&&$\pm$0.0102&$\pm$0.0158&&$\pm$0.0082&$\pm$0.0087&$\pm$0.0121&&$\pm$0.0128\\ \hline

\\
\multicolumn{5}{l}{Scenario 5: $\sigma_{i,j}=0.5, \pi_0 = 1.0$} & & & & & & & \\ \hline
Thr-p&0.0001&0.0003&&0.0110&0.1085&&0.0000&0.0006&0.0019&&0.0237\\
\# of Rej&0.18&0.22&&1.61&6.69&&0.08&0.32&0.70&&2.56\\
FDR&0.0345&0.0345&&\textcolor{red}{0.0850}&\textcolor{red}{0.1695}&&0.0290&0.0345&\textcolor{red}{0.0645}&&\textcolor{red}{0.6365}\\
&$\pm$0.0082&$\pm$0.0082&&$\pm$0.0125&$\pm$0.0168&&$\pm$0.0075&$\pm$0.0082&$\pm$0.0110&&$\pm$0.0215\\ \hline
\\
\multicolumn{5}{l}{Scenario 6: $\sigma_{i,j}=0.9, \pi_0 = 1.0$} & & & & & & & \\ \hline
Thr-p&0.0006&0.0012&&0.0619&0.0747&&0.0003&0.0029&0.0061&&0.0083\\
\# of Rej&0.69&0.78&&8.54&9.30&&0.46&1.49&2.27&&2.15\\
FDR&0.0165&0.0160&&\textcolor{red}{0.1715}&\textcolor{red}{0.1860}&&0.0210&0.0335&0.0490&&\textcolor{red}{0.1871}\\
&$\pm$0.0057&$\pm$0.0056&&$\pm$0.0169&$\pm$0.0174&&$\pm$0.0064&$\pm$0.0080&$\pm$0.0097&&$\pm$0.0174\\ \hline

\\
\multicolumn{5}{l}{Scenario 7: $\sigma_{i,j}=0.0, \pi_0 = 0.25$} & & & & & & & \\ \hline
Thr-p&0.0009&0.0009&&0.0022&0.1518&&0.0002&0.0009&0.0045&&0.0409\\
\# of Rej&1.34&1.31&&2.04&10.39&&0.69&1.35&2.89&&8.32\\
FDR&0.0155&0.0155&&0.0219&\textcolor{red}{0.0638}&&0.0106&0.0147&0.0286&&\textcolor{red}{0.0704}\\
&$\pm$0.0042&$\pm$0.0042&&$\pm$0.0044&$\pm$0.0055&&$\pm$0.0041&$\pm$0.0041&$\pm$0.0045&&$\pm$0.0041\\ \hline

\\
\multicolumn{5}{l}{Scenario 8: $\sigma_{i,j}=0.5, \pi_0 = 0.25$} & & & & & & & \\ \hline
Thr-p&0.0021&0.0041&&0.0331&0.2066&&0.0006&0.0042&0.0125&&0.0386\\
\# of Rej&2.44&2.89&&5.72&13.72&&1.48&2.91&4.73&&8.14\\
FDR&0.0099&0.0122&&0.0251&\textcolor{red}{0.0682}&&0.0068&0.0105&0.0202&&\textcolor{red}{0.1062}\\
&$\pm$0.0033&$\pm$0.0034&&$\pm$0.0040&$\pm$0.0057&&$\pm$0.0030&$\pm$0.0027&$\pm$0.0040&&$\pm$0.0102\\ \hline

\\
\multicolumn{5}{l}{Scenario 9: $\sigma_{i,j}=0.9, \pi_0 = 0.25$} & & & & & & & \\\hline 
Thr-p&0.0030&0.0074&&0.0705&0.1964&&0.0025&0.0081&0.0184&&0.0292\\
\# of Rej&3.52&4.28&&10.17&16.44&&3.38&4.59&6.66&&8.22\\
FDR&0.0083&0.0127&&0.0438&\textcolor{red}{0.0786}&&0.0091&0.0133&0.0212&&\textcolor{red}{0.0651}\\
&$\pm$0.0022&$\pm$0.0025&&$\pm$0.0042&$\pm$0.0052&&$\pm$0.0030&$\pm$0.0027&$\pm$0.0033&&$\pm$0.0096\\ \hline

\end{tabular}
\end{center}
\end{footnotesize}
\end{table*}%

\clearpage

\section{Serial correlation}
\label{sec_serialcorrelation}

\begin{figure}
\vspace{2em}
\centering
\includegraphics[scale=0.75]{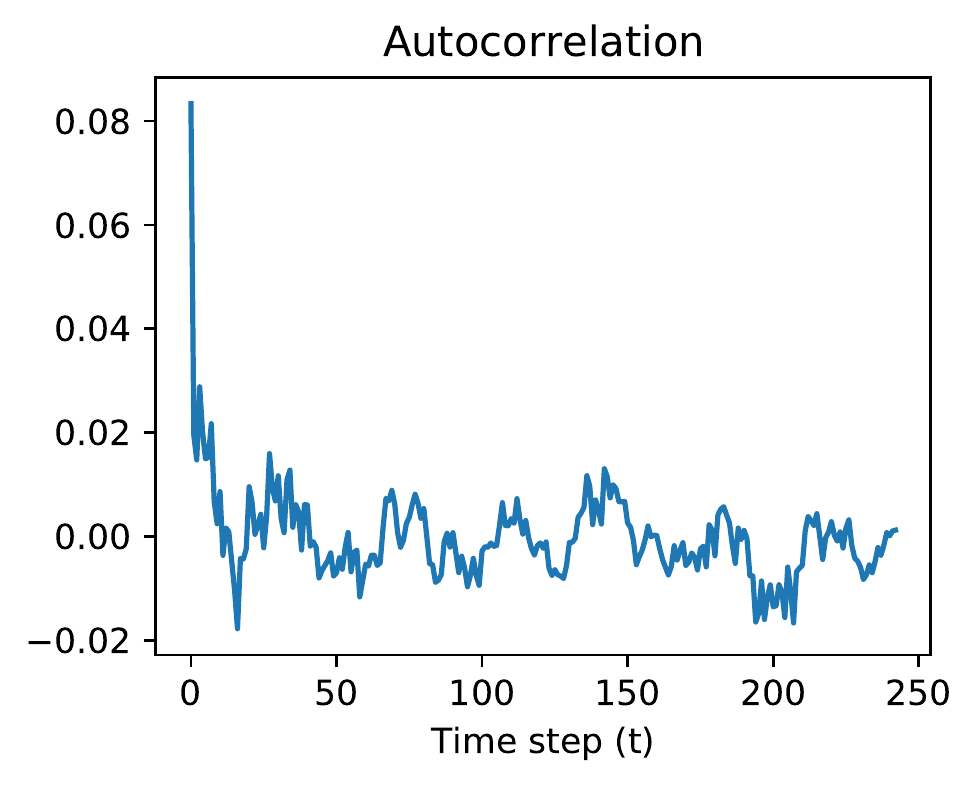}
\caption{Average autocorrelation in $\hat{a}_{i,t}$ averaged over twenty major factors listed in Table \ref{tbl_factors}.}
\label{fig:autocorrelation}
\end{figure}

Figure \ref{fig:autocorrelation} is the average auto-correlation of the returns from the twenty major factors in MSCI USA Index data (Section \ref{subsec:simreal}), in which there is hardly any auto-correlation.

\textbf{Future work under auto-correlation:}
The literature of bootstrapping states that the standard bootstrap method is inconsistent under serial correlation. Instead, when the time series is alpha-mixing (i.e. decaying auto-correlation), block bootstrap methods are consistent (Theorem 3.1 and 3.2 in \cite{lahiri2003resampling}).

\section{Real data: null distribution}

Figure \ref{fig:tvalue_null} shows the distribution of the null distribution. It is clear that the shape of the null (if marginalized for each $i$) is reasonably similarly to $\Tdist_{T-4}$.
\begin{figure}
\vspace{2em}
\centering
\includegraphics[scale=0.75]{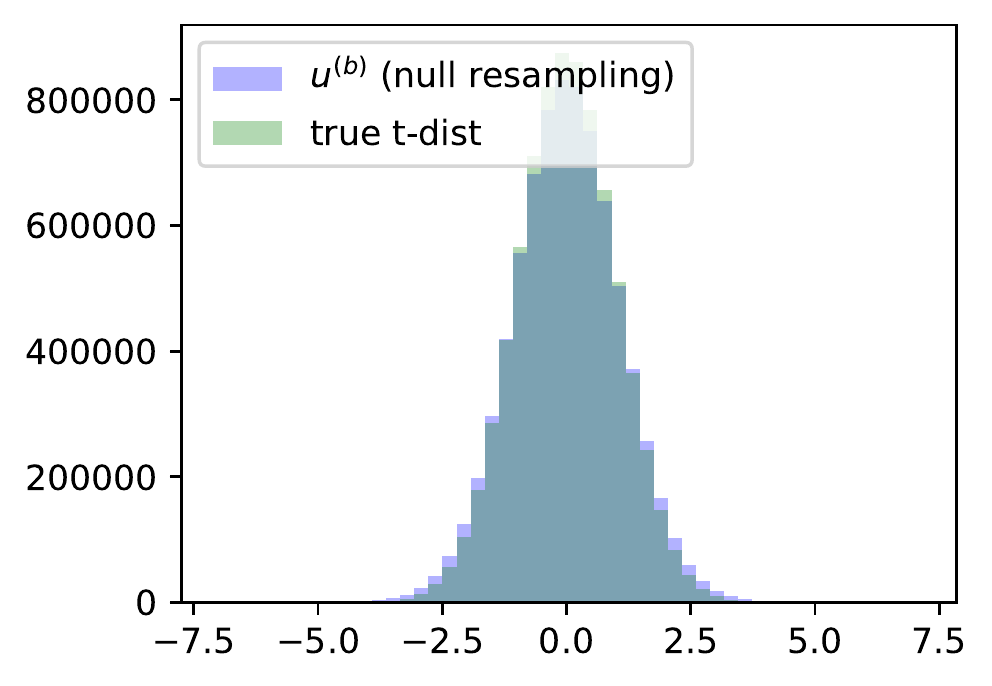}
\caption{Distribution of resampled residuals $(u_i^{(b)})_{i\in[N],b\in[B]}$ (``$u^{(b)}$ (null resampling)'') and random values drawn from the true $t$-distribution $\Tdist_{T-4}$ (``true $t$-dist''). Two distributions reasonably match. The resampled values are slightly fat-tailed than the true $t$ DGP.}
\label{fig:tvalue_null}
\end{figure}

\section{Proofs}\label{sec_proofs}


\begin{proof}[Proof of Theorem \ref{thm_marginaldist}]
The derivation of $\Normal(\bbeta_i, \sigma_i^2 (\bX^\top \bX)^{-1})$ is found in standard textbooks of econometrics (e.g., Section E-2 in \cite{Wooldridge}).
Let $(\bX_1, \bX_{2:}) = \bX$ be the first column (intercept term) and the rest (FF3 term). Namely, each row of $\bX_1$ is constant $1$ and each row of $\bX_{2:}$ is $(MKT_t, SMB_t, HML_t) \in \Real^3$. Let $((\hat{\bbeta}_i)_{1}, (\hat{\bbeta}_i)_{2:}) = \hat{\bbeta}_i \in \Real^{1+3}$ be defined in the same way.
The explicit formula of $\hat{a}_i$ is derived by using standard discussion on the partitioned regression model \footnote{Eq.~(23) in \url{https://www.le.ac.uk/users/dsgp1/COURSES/THIRDMET/MYLECTURES/2MULTIREG.pdf})}:
\begin{align}
\hat{a}_i &:= (\hat{\bbeta}_i)_1 \\
&= (\bX_1^\top \bX_1)^{-1} \bX_1^{\top} (\br_i - \bX_{2:}^\top (\hat{\bbeta}_i)_{2:})\\ 
&\hspace{-3em} \text{\ \ \ (by standard discussion on the partitioned regression model)}\\
&= (1/T) \sum_{t\in[T]} r_{i,t} \\ 
&\hspace{-3em} \text{\ \ \ (by the fact that each row of $\bX_{2:}$ has sum zero and $(\bX_1^\top \bX_1)^{-1} \bX_1^{\top} \bv = (1/T) \sum_{t\in[T]} v_t$ for a vector $\bv \in \Real^T$)} \\
&=a_i + \frac{1}{T} \sum_{t \in [T]} \hat{\eps}_{i, t} \\
&\hspace{-3em} \text{\ \ \ (by the fact that each row of $\bX_{2:}$ has sum zero)} \\
\end{align}
from which it is easy to see that
\begin{align}
\Ex[\hat{a}_i] &= a_i\\
T \Cov(\hat{a}_i, \hat{a}_j) &= \Cov(\eps_{i,1}, \eps_{j,1}).
\end{align}
\end{proof}

\begin{proof}[Proof of Theorem \ref{thm_resboot}]
Let $\ES(T) = [T] \times [T] \times ... \times [T]$ be the set of all the possible ways to obtain $T$ samples with replacements ($|\ES(T)| = T^T$).
For a random variable $r = r(\ET_b)$, let 
\begin{equation}\label{ineq_defboot}
\Ex_{\boot}[r(\ET_b)] := \frac{1}{|\ES(T)|} \sum_{\ET_b \in \ES(T)} r(\ET_b).
\end{equation}
By the definition of OLS with intercept term, we have 
\begin{equation}\label{ineq_defzeroavg}
\frac{1}{T} \sum_{t=1}^T \hat{\eps}_{i,t} = 0.
\end{equation}
Since $\hat{e}_i^{(b)}$ is a resampled residual,
\begin{equation}
\Ex_{\boot}[\hat{e}_i^{(b)}] = \frac{1}{|\ES(T)|} \sum_{\ET_b \in \ES(T)} \left( \frac{1}{T} \sum_{t\in \ET_b} \hat{\eps}_{i,t}\right)
= \frac{1}{T} \sum_{t=1}^T \hat{\eps}_{i,t} = 0,
\end{equation}
which is Eq.~\eqref{ineq_bootstrap_mean}.

We next derive Eq.~\eqref{ineq_bootstrap_cov}.
We have
\begin{align}
\hat{\eps}_{i,t} - \eps_{i,t} 
&= \bx_t^\top (\hat{\bbeta}_i - \bbeta_i) \\
&\le |\bx_t| |\hat{\bbeta}_i - \bbeta_i|.
\end{align}
Theorem \ref{thm_marginaldist} states that $\hat{\bbeta}_i - \bbeta$ is distributed as $\Normal(0, \sigma_i^2/(\bX^\top \bX)^{-1})$. 
Letting $||\bX||$ be constant, with probability $\log(1/\delta)$, 
\begin{equation}\label{ineq_epsest_bound}
|\hat{\eps}_{i,t} - \eps_{i,t}| = O(\log(1/\delta)/T) 
\end{equation}
for all $t \in [T]$.

Since $\hat{e}_i^{(b)}$ is a bootstrap average of $T$ samples from $(\hat{\eps}_{i,t})_{t \in [T]}$ and $\hat{\eps}_{i,t}$ is zero-mean,
\begin{align}
 T \Cov_{\boot}(\hat{e}_i^{(b)}, \hat{e}_j^{(b)}) 
 &= 
 \frac{1}{T} \Ex_{\boot}\left[\left(\sum_{t \in \ET_b} \hat{\eps}_{i,t}\right) \left(\sum_{t \in \ET_b} \hat{\eps}_{j,t}\right)\right] \text{\ \ \ (by definition)}\\
 &= 
 \frac{1}{T} \Ex_{\boot}\left[\sum_{t \in \ET_b} \hat{\eps}_{i,t} \hat{\eps}_{j,t}\right]\\
& \text{\ \ \ (by i.i.d. property of sampling with replacement and Eq.~\eqref{ineq_defzeroavg})}\\
 &= 
\frac{1}{T} \sum_{t \in [T]} \hat{\eps}_{i,t} \hat{\eps}_{j,t} \\
 &= \Cov(\eps_{i,1}, \eps_{j,1}) + O\left(\frac{\log(1/\delta)}{T}\right),
 \text{\ \ \ (by Eq.~\eqref{ineq_epsest_bound})}
\end{align}
which concludes Theorem \ref{thm_resboot}.
\end{proof}

\begin{proof}[Proof of Lemma \ref{lem:numhypo}]
Let 
\begin{align}
\mone(\bu', \bu'') &= \sum_{i \in \Hzero} \Ind\left[u_i' \ge u_i''\right]\\
\mtwo(\bu', \bu'') &= \sum_{i \in \Hzero} \Ind\left[|u_i'| \ge |u_i''|\right]
\end{align}
for $\bu', \bu'' \in \Real^N$.
It is easy to check that, $m(\bu', \bu'') \ge N_0/2$ or $m(\bu'', \bu') \ge N_0/2$ always holds for $m \in \{\mone, \mtwo\}$. 
Let an event be $\ED(\bu', \bu'') = \{m(\bu', \bu'') \ge N_0/2\}$.
We may interpret $\ED(\bu', \bu'')$ as a result of a ``duel'' between two ``players'' $\bu'$ and $\bu''$, where at least one of $\bu'$ or $\bu''$ will win. The goal of the proof is to bound the probability that a ``random player'' $\bu'$ is near insurmountable (very hard-to-beat), is not very high, which guarantees that there is at least one instance, such that $\ED(\bu^{(v)}, \balpha) \subseteq D(\bu^{(v)}, \bu)$, exists for $\bu \sim \EU$ with high probability.
In the following, we define function $s(\bu')$ that represents ``skill'': The ratio of $\bu''$ that beats $\bu'$. Smaller $s$ represents a player that is harder to beat. We can say that the player with skill $q$ or less is at most $2q$. 
Let 
\begin{equation}
s(\bu') = \Pr_{\bu'' \sim \EU}[\ED(\bu'', \bu')].
\end{equation}
and 
\begin{align}
\ES(q) &= \{\bu' \in \EU: s(\bu') \le q\} \\
S(q)   &= \Pr_{\bu' \sim \EU}[\bu' \in \ES(q)]. 
\end{align}
For any $q \in (0,1)$, we prove that $S(q) \le 2q$ by contradiction. Assume that $S(q)= 2q + \eps$ for some $q, \eps > 0$. 
Let $f(\bu')$ be the PDF of $\bu' \sim \EU$.
By definition, $\bu' \in \ES(q)$ implies
\begin{align}\label{ineq:ESdef}
q 
&\ge \int \Ind[\ED(\bu'', \bu')] df(\bu'')\\
&\ge \int \Ind[\ED(\bu'', \bu'), \bu'' \in \ES(q)] df(\bu'')
\end{align}
and thus we have
\begin{align}
\lefteqn{
(2q+\eps)q
}\\
&= q \int \Ind[\bu' \in \ES(q)] df(\bu') \text{\ \ \ \ (by $S(q)= 2q + \eps$)} \\
&\ge \int \int \Ind[\ED(\bu'', \bu'), \bu', \bu'' \in \ES(q)] df(\bu'') df(\bu') \\
&= (2q+\eps)^2/2 \\
& \text{\ \ \ \ (at least one of $\ED(\bu'', \bu')$ or $\ED(\bu', \bu'')$ always hold)},
\end{align}
which contradicts.

Let $\ED^c$ be a complementary event of $\ED$.
We have
\begin{align}
\lefteqn{
1 - \Pr\left[ \bigcup_{v \in [V]}\left\{ m(\bu^{(v)}, \hbalpha) \ge N_0/2\right\} \right]
}\\
&\le 1 - \Pr\left[ \bigcup_{v \in [V]}\left\{ m(\bu^{(v)}, \bu) \ge N_0/2\right\} \right]
\\
&= \Pr\left[ \bigcap_{v \in [V]}\left\{ m(\bu^{(v)}, \bu) < N_0/2\right\} \right] \\
&= \Pr\left[ \bigcap_{v \in [V]}\left\{ \ED^c(\bu^{(v)}, \bu)\right\} \right] \\
&\le \int_0^1 (1 - q)^{V} dS(q) \\
&\text{\ \ (by independence of bootstrap samples)} \\
&\le \int_0^{1/2} (1 - q)^{V} 2 dq \text{\ \ \ \ (by $S(q) \le 2q$)}\\
&= \frac{2}{V+1} [(1 - q)^{V+1}]_0^{1/2} \le \frac{2}{R_1+1},
\end{align}
which implies Eq.~\eqref{ineq:numhypo}.
\end{proof}

\if0
\begin{proof}[Theorem \ref{thm:ideal}]
Let 
\begin{equation}
C_q(\EX) = \sup_{c_q}\left\{c_q: \left(\sup_{\balpha' \in \EX} \FDR(\bmu, c_q)\right) \le q\right\}.
\end{equation}
Then, $\sup_{\balpha' \in \EX} \FDR(\balpha', c_q^*) \ge q$ implies $C_q(\EX) \le c_q^*$, which, together with the non-decreasing property of the FDR, leads to 
\begin{equation}
\FDR(\balpha, C_q(\EX)) \le \FDR(\balpha, c_q^*) = q. 
\end{equation}
\end{proof}
\fi

\begin{proof}[Proof of Theorem \ref{thm:main_prac}]
Let $e = \sqrt{\frac{\log((V S)/\delta)}{2 W}}$.
A union bound of Lemma \ref{lem_empfdr} with $\delta = \delta/(VS)$ over $v = 1,2,\dots,V$ and all the $S$ candidates $c_q^{(v,s)}$ of $c_q^{(v)}$ yields
\begin{equation}\label{ineq:estfdrdiff}
\forall\,v \in [V]\,s\in[S]\, |\FDR(\hbalpha^{(v)}, c_q^{(v,s)}) - \hatFDR(\hbalpha^{(v)}, c_q^{(v,s)})| \le e
\end{equation}
with probability at least $1-2\delta$.
Let 
\begin{align}
C_q(e)  &= \min_{v \in [V]} \sup\left\{c_q\in(0,1): \FDR(\hbalpha^{(v)}, c_q) \le q/2 + e\right\} \\
c_q^*(e) &= \sup\left\{c_q\in(0,1): \FDR(\balpha, c_q) \le q + 2 e\right\}
\end{align}
Then, we have
\begin{align}
\min_{v\in[V]} c_q^{(v)} 
&\le C_q(e) \text{\ \ \ \ (by \eqref{ineq:estfdrdiff})}\\
&\le c_{q'}^*(e) \text{\ \ \ \ (by \eqref{ineq:assm_existhard} with $q' = q + 2e$)},
\end{align}
which, together with the non-decreasing property of FDR, yields
\begin{align}
\FDR\left(\balpha, \min_{v\in[V]} c_q^{(v)}\right) 
&\le \FDR(\balpha, C_q(e)) \\
&\le \FDR(\balpha, c_{q'}^*(e)) = q + 2 e
\end{align}
which completes the proof.
\end{proof}

\section{Instruction for reproducing simulation results}


We conducted each of our simulations on a standard Xeon server with $> 20$ cores. We ran Python3.9.1 with numpy, scipy, pandas, matplotlib, multiprocessing, and joblib libraries. 

\subsection{Reproducing synthetic data simulations}

Each of our nine synthetic scenarios are reproduced by 
\begin{lstlisting} 
python fdr_factors.py -r 2000 -s X -c 1  > scenarioX_r2000.txt 
python printresults.py scenarioX_r2000.txt
\end{lstlisting}
where $X \in \{1,2,3,\dots,9\}$.
It takes about 20 hours for our server to complete each of the scenarios of 2,000 runs. The computation time per run is below 1 minute, which is feasible in many real use cases.

\subsection{Reproducing real data simulations}

Our real-world data simulation is reproduced by
\begin{lstlisting} 
python fdr_factors.py -r 100 -s 0 -a data/US_tvalues.csv -x  \\
          data/US_tvalues_null.csv -c 1 > real780_r100.txt
python printresults.py real780_r100.txt
\end{lstlisting}
It takes $<4$ hours for our server to complete the above code of 100 runs (less than $3$ minutes per run).

\subsection{Reproducing real data autocorrelation}

Figure \ref{fig:autocorrelation} can be reproduced by 
\begin{lstlisting} 
python autocorrelation.py
\end{lstlisting}

\end{document}